\newif\ifarxiv
\theoremstyle{plain}
\newtheorem{theorem}{Theorem}
\newtheorem{lemma}[theorem]{Lemma}
\newtheorem{proposition}[theorem]{Proposition}
\newtheorem{corollary}[theorem]{Corollary}
\theoremstyle{definition}
\newtheorem{definition}[theorem]{Definition}
\mathchardef\mhyph="2D
\newcommand{\binalph}{\{ 0, 1 \}}
\newcommand{\eps}{\varepsilon}
\newcommand{\N}{\mathbb{N}}
\newcommand{\Z}{\mathbb{Z}}
\let\SS\relax
\DeclareMathOperator{\bin}{bin}
\DeclareMathOperator{\SS}{SS}
\newclass{\ACA}{ACA}
\newclass{\CA}{CA}
\newclass{\DLOGTIME}{DLOGTIME}
\newclass{\LT}{LT}
\newclass{\DACA}{DACA}
\newclass{\SLT}{SLT}
\newlang{\BIN}{BIN}
\newlang{\DELTA}{DELTA}
\newlang{\IDMAT}{IDMAT}
\newlang{\SOMEONE}{SOMEONE}
\newlang{\STATE}{STATE}
\newcommand{\FOL}{\FO_\L}
\newcommand{\cmark}{\ding{51}}
\newcommand{\xmark}{\ding{55}}
\newcommand{\myomit}[1]{%
  \ifarxiv#1\fi
}
\title{Sublinear-Time Language Recognition and Decision by One-Dimensional
Cellular Automata%
  \ifarxiv\else\thanks{Some proofs have been omitted due to page constraints.
    These may be found in the full version of the paper
 \cite{modanese19_sublinear-time_arxiv}.}\fi
}
\newcommand{\myname}{Augusto Modanese}
\newcommand{\myaffil}{Karlsruhe Institute of Technology (KIT), Germany}
\newcommand{\myemail}{modanese@kit.edu}
\author{\myname \\ \\ \normalsize{\myaffil} \\ \normalsize{\texttt{\myemail}}}
\date{}
\author{\myname$^{(\textrm{\Letter})}$}
\institute{Karlsruhe Institute of Technology (KIT), Germany \\
  \email{modanese@kit.edu}
}
\begin{document}

\maketitle

\begin{abstract}
  After an apparent hiatus of roughly 30 years, we revisit a seemingly neglected
subject in the theory of (one-dimensional) cellular automata: sublinear-time
computation.
  The model considered is that of ACAs, which are language acceptors whose
acceptance condition depends on the states of all cells in the automaton.
  We prove a time hierarchy theorem for sublinear-time ACA classes, analyze
their intersection with the regular languages, and, finally, establish strict
inclusions in the parallel computation classes $\SC$ and (uniform) $\AC$.
  As an addendum, we introduce and investigate the concept of a decider ACA
(DACA) as a candidate for a decider counterpart to (acceptor) ACAs.
  We show the class of languages decidable in constant time by DACAs equals the
locally testable languages, and we also determine $\Omega(\sqrt{n})$ as the
(tight) time complexity threshold for DACAs up to which no advantage compared
to constant time is possible.
\end{abstract}


\section{Introduction}

While there have been several works on linear- and real-time language
recognition by cellular automata over the years (see, e.g.,
\cite{kutrib_language_theory, terrier12} for an overview), interest in the
sublinear-time case has been scanty at best.
We can only speculate this has been due to a certain obstinacy concerning what
is now the established acceptance condition for cellular automata, namely that
the first cell determines the automaton's response, despite alternatives being
long known \cite{rosenfeld_book}.
Under this condition, only a constant-size prefix can ever influence the
automaton's decision, which effectively dooms sublinear time to be but a trivial
case just as it is for (classical) Turing machines, for example.
Nevertheless, at least in the realm of Turing machines, this shortcoming was
readily circumvented by adding a random access mechanism to the model, thus
sparking rich theories on parallel computation \cite{cook_NC, ruzzo_NC_ATMs},
probabilistically checkable proofs \cite{sudan_PCP_survey}, and property testing
\cite{fischer_property_testing_survey, rubinfeld_property_testing_survey}.

In the case of cellular automata, the adaptation needed is an alternate (and by
all means novel) acceptance condition, covered in Section~\ref{sec_def}.
Interestingly, in the resulting model, called ACA, parallelism and local
behavior seem to be more marked features, taking priority over cell
communication and synchronization algorithms (which are the dominant themes in
the linear- and real-time constructions).
As mentioned above, the body of theory on sublinear-time ACAs is very small and,
to the best of our knowledge, resumes itself to \cite{Ibarra_ACA, kim_ACA,
sommerhalder_ACA}.
\citeauthor{Ibarra_ACA} \cite{Ibarra_ACA} show sublinear-time ACAs are capable
of recognizing non-regular languages and also determine a threshold (namely
$\Omega(\log n)$) up to which no advantage compared to constant time is
possible.
Meanwhile, \citeauthor{kim_ACA} \cite{kim_ACA} and \citeauthor{sommerhalder_ACA}
\cite{sommerhalder_ACA} analyze the constant-time case subject to different
acceptance conditions and characterize it based on the locally testable
languages, a subclass of the regular languages.

Indeed, as covered in Section~\ref{sec_first_obs}, the defining property of the
locally testable languages, that is, that words which locally appear to be the
same are equivalent with respect to membership in the language at hand,
effectively translates into an inherent property of acceptance by sublinear-time
ACAs.
In Section~\ref{sec_main_results}, we prove a time hierarchy theorem for
sublinear-time ACAs as well as further relate the language classes they define
to the regular languages and the parallel computation classes $\SC$ and
(uniform) $\AC$.
In the same section, we also obtain an improvement on a result of
\cite{Ibarra_ACA}.
Finally, in Section~\ref{sec_DACA} we consider a plausible model of ACAs as
language deciders, that is, machines which must not only accept words in the
target language but also explicitly reject those which do not.
Section~\ref{sec_conclusion} concludes.

\section{Definitions} \label{sec_def}

We assume the reader is familiar with the theory of formal languages and
cellular automata as well as with computational complexity theory (see, e.g.,
standard references \cite{arora09_computational_book, delorme99_cellular_book}).
This section reviews basic concepts and introduces ACAs.

$\Z$ denotes the set of integers, $\N_+$ that of (strictly) positive integers,
and $\N_0 = \N_+ \cup \{ 0 \}$.
\myomit{For a function $f\colon A \to B$ and $A' \subseteq A$, $f|A'$ indicates
  the restriction of $f$ to $A'$.}
$B^A$ is the set of functions $f\colon A \to B$.
For a word $w \in \Sigma^\ast$ over an alphabet $\Sigma$, $w(i)$ is the $i$-th
symbol of $w$ (starting with the $0$-th symbol), and $|w|_x$ is the number of
occurrences of $x \in \Sigma$ in $w$.
For $k \in \N_0$, $p_k(w)$, $s_k(w)$, and $I_k(w)$ are the prefix, suffix
and set of infixes of length $k$ of $w$, respectively, where $p_{k'}(w) =
s_{k'}(w) = w$ and $I_{k'}(w) = \{ w \}$ for $k' \ge |w|$.
$\Sigma^{\le k}$ is the set of words $w \in \Sigma^\ast$ for which $|w| \le k$.
Unless otherwise noted, $n$ is the input length.

\subsection{(Strictly) Locally Testable Languages}

The class $\REG$ of regular languages is defined in terms of (deterministic)
automata with finite memory and which read their input in a single direction
(i.e., from left to right), one symbol at a time; once all symbols have been
read, the machine outputs a single bit representing its decision.
In contrast, a \emph{scanner} is a memoryless machine which reads a span of $k
\in \N_+$ symbols at a time of an input provided with start and end markers (so
it can handle prefixes and suffixes separately); the scanner validates every
such substring it reads using the same predicate, and it accepts if and only if
all these validations are successful.
The languages accepted by these machines are the strictly locally testable
languages.%
\footnote{The term \enquote{(locally) testable in the strict sense} ((L)TSS) is
also common \cite{kim_ACA,mcnaughton_LT,mcnaughton_papert}.}

\begin{definition}[strictly locally testable] \label{def_SLT}
  Let $\Sigma$ be an alphabet.
  A language $L \subseteq \Sigma^\ast$ is \emph{strictly locally testable} if
there is some $k \in \N_+$ and sets $\pi, \sigma \subseteq \Sigma^{\le k}$ and
$\mu \subseteq \Sigma^k$ such that, for every word $w \in \Sigma^\ast$, $w \in
L$ if and only if $p_k(w) \in \pi$, $I_k(w) \subseteq \mu$, and $s_k(w) \in
\sigma$.
  $\SLT$ is the class of strictly locally testable languages.
\end{definition}

A more general notion of locality is provided by the locally testable languages.
Intuitively, $L$ is locally testable if a word $w$ being in $L$ or not is
entirely dependent on a property of the substrings of $w$ of some constant
length $k \in \N_+$ (that depends only on $L$, not on $w$).
Thus, if any two words have the same set of substrings of length $k$, then they
are equivalent with respect to being in $L$:

\begin{definition}[locally testable] \label{def_LT}
  Let $\Sigma$ be an alphabet.
  A language $L \subseteq \Sigma^\ast$ is \emph{locally testable} if there is
some $k \in \N_+$ such that, for every $w_1, w_2 \in \Sigma^\ast$ with $p_k(w_1)
= p_k(w_2)$, $I_k(w_1) = I_k(w_2)$, and $s_k(w_1) = s_k(w_2)$ we have that $w_1
\in L$ if and only if $w_2 \in L$.
  $\LT$ denotes the class of locally testable languages.
\end{definition}

$\LT$ is the \emph{Boolean closure} of $\SLT$, that is, its closure under
union, intersection, and complement \cite{mcnaughton_papert}.
In particular, $\SLT \subsetneq \LT$ (i.e., the inclusion is proper
\cite{mcnaughton_LT}).

\subsection{Cellular Automata} \label{sec_def_ACA}

In this paper, we are strictly interested in one-dimensional cellular automata
with the standard neighborhood.
For $r \in \N_0$, let $N_r(z) = \{ z' \in \Z \mid |z - z'| \le r \}$ denote the
\emph{extended neighborhood of radius $r$} of the cell $z \in \Z$.

\begin{definition}[cellular automaton] \label{def_CA}
  A \emph{cellular automaton} (\emph{CA}) $C$ is a triple $(Q, \delta, \Sigma)$
where $Q$ is a finite, non-empty set of \emph{states}, $\delta\colon Q^3 \to Q$
is the \emph{local transition function}, and $\Sigma \subseteq Q$ is the
\emph{input alphabet}.
  An element of $Q^3$ (resp., $Q^\Z$) is called a \emph{local} (resp.,
\emph{global}) \emph{configuration} of $C$.
  $\delta$ induces the \emph{global transition function} $\Delta\colon Q^\Z \to
Q^\Z$ on the configuration space $Q^\Z$ by $\Delta(c)(z) = \delta(c(z-1), c(z),
c(z+1))$, where $z \in \Z$ is a cell and $c \in Q^\Z$.
\end{definition}

Our interest in CAs is as machines which receive an input and process it until a
final state is reached.
The input is provided from left to right, with one cell for each input symbol.
The surrounding cells are inactive and remain so for the entirety of the
computation (i.e., the CA is bounded).
It is customary for CAs to have a distinguished cell, usually cell zero, which
communicates the machine's output.
As mentioned in the introduction, this convention is inadequate for computation
in sublinear time; instead, we require the finality condition to depend on the
entire (global) configuration (modulo inactive cells):

\begin{definition}[CA computation]
  There is a distinguished state $q \in Q \setminus \Sigma$, called the
\emph{inactive state}, which, for every $z_1, z_2, z_3 \in Q$, satisfies
$\delta(z_1, z_2, z_3) = q$ if and only if $z_2 = q$.
  A cell not in state $q$ is said to be \emph{active}.
  For an input $w \in \Sigma^\ast$, the \emph{initial configuration} $c_0 =
c_0(w) \in Q^\Z$ of $C$ for $w$ is $c_0(i) = w(i)$ for $i \in \{ 0, \dots, |w| -
1 \}$ and $c_0(i) = q$ otherwise.
  For $F \subseteq Q \setminus \{ q \}$, a configuration $c \in Q^\Z$ is
$F$-\emph{final} (for $w$) if there is a (minimal) $\tau \in \N_0$ such that $c
= \Delta^\tau(c_0)$ and $c$ contains only states in $F \cup \{ q \}$.
  In this context, the sequence $c_0, \dots, \Delta^\tau(c_0) = c$ is the
\emph{trace} of $w$, and $\tau$ is the \emph{time complexity} of $C$ (with
respect to $F$ and $w$).
\end{definition}

Because we effectively consider only bounded CAs, the computation of $w$
involves exactly $|w|$ active cells.
The surrounding inactive cells are needed only as markers for the start and end
of $w$.
As a side effect, the initial configuration $c_0 = c_0(\eps)$ for the empty word
$\eps$ is stationary (i.e., $\Delta(c_0) = c_0$) regardless of the choice of
$\delta$.
Since this is the case only for $\eps$, we disregard it for the rest of the
paper, that is, we assume it is not contained in any of the languages
considered.

Finally, we relate final configurations and computation results.
We adopt an acceptance condition as in \cite{sommerhalder_ACA, rosenfeld_book}
and obtain a so-called ACA; here, the \enquote{A} of \enquote{ACA} refers to the
property that all (active) cells are relevant for acceptance.

\begin{definition}[ACA] \label{def_ACA}
  An \emph{ACA} is a CA $C$ with a non-empty subset $A \subseteq Q \setminus \{
q \}$ of \emph{accept states}.
  For $w \in \Sigma^+$, if $C$ reaches an $A$-final configuration, we say $C$
\emph{accepts} $w$.
  $L(C)$ denotes the set of words accepted by $C$.
  For $t\colon \N_+ \to \N_0$, we write $\ACA(t)$ for the class of languages
accepted by an ACA with time complexity bounded by $t$, that is, for which the
time complexity of accepting $w$ is $\le t(|w|)$.
\end{definition}

$\ACA(t_1) \subseteq \ACA(t_2)$ is immediate for functions $t_1, t_2\colon \N_+
\to \N_0$ with $t_1(n) \le t_2(n)$ for every $n \in \N_+$.
Because Definition~\ref{def_ACA} allows multiple accept states, it is possible
for each (non-accepting) state $z$ to have a corresponding accept state $z_A$.
In the rest of this paper, when we say a cell becomes (or marks itself as)
accepting (without explicitly mentioning its state), we intend to say it changes
from such a state $z$ to $z_A$.

Figure~\ref{fig_ACA_0ast} illustrates the computation of an ACA with input
alphabet $\Sigma = \binalph$ and which accepts $\{ 01 \}^+$ with time complexity
equal to one (step).
The local transition function is such that $\delta(0,1,0) = \delta(1,0,1) =
\delta(q,0,1) = \delta(0,1,q) = a$, $a$ being the (only) accept state, and
$\delta(z_1, z_2, z_3) = z_2$ for $z_2 \neq a$ and arbitrary $z_1$ and $z_3$.

\begin{figure}[t]
  \centering
  \begin{tikzpicture}[every node/.style={block},
      block/.style={
        minimum width=width("$0$") + 3mm,
        minimum height=height("$0$") + 3mm,
        text height=1.5ex,
        text depth=.3ex,
        outer sep=0pt,
        draw, rectangle, node distance=0pt}]

    \newlength{\rowspc}
    \setlength{\rowspc}{1.55em}
    \newlength{\colspc}
    \setlength{\colspc}{19.65em}

    \node    (z0)                            {$q$};
    \node    (z1)  [right=of z0]             {$0$};
    \node    (z2)  [right=of z1]             {$1$};
    \node    (z3)  [right=of z2]             {$0$};
    \node    (z4)  [right=of z3]             {$1$};
    \node    (z5)  [right=of z4]             {$0$};
    \node    (z6)  [right=of z5]             {$1$};
    \node    (z7)  [right=of z6]             {$q$};
    \draw    (z0.north west) -- ++(-0.4cm,0)
             (z0.south west) -- ++(-0.4cm,0)
             (z7.north east) -- ++(0.4cm,0)
             (z7.south east) -- ++(0.4cm,0);

    \node    (x0)  [yshift=-\rowspc]          {$q$};
    \node    (x1)  [right=of x0]             {$a$};
    \node    (x2)  [right=of x1]             {$a$};
    \node    (x3)  [right=of x2]             {$a$};
    \node    (x4)  [right=of x3]             {$a$};
    \node    (x5)  [right=of x4]             {$a$};
    \node    (x6)  [right=of x5]             {$a$};
    \node    (x7)  [right=of x6]             {$q$};
    \draw    (x0.north west) -- ++(-0.4cm,0)
             (x0.south west) -- ++(-0.4cm,0)
             (x7.north east) -- ++(0.4cm,0)
             (x7.south east) -- ++(0.4cm,0);

    \node    (w0)  [xshift=\colspc]          {$q$};
    \node    (w1)  [right=of w0]             {$0$};
    \node    (w2)  [right=of w1]             {$0$};
    \node    (w3)  [right=of w2]             {$1$};
    \node    (w4)  [right=of w3]             {$0$};
    \node    (w5)  [right=of w4]             {$1$};
    \node    (w6)  [right=of w5]             {$0$};
    \node    (w7)  [right=of w6]             {$q$};
    \draw    (w0.north west) -- ++(-0.4cm,0)
             (w0.south west) -- ++(-0.4cm,0)
             (w7.north east) -- ++(0.4cm,0)
             (w7.south east) -- ++(0.4cm,0);

    \node    (y0)  [xshift=\colspc,yshift=-\rowspc] {$q$};
    \node    (y1)  [right=of y0]             {$0$};
    \node    (y2)  [right=of y1]             {$0$};
    \node    (y3)  [right=of y2]             {$a$};
    \node    (y4)  [right=of y3]             {$a$};
    \node    (y5)  [right=of y4]             {$a$};
    \node    (y6)  [right=of y5]             {$0$};
    \node    (y7)  [right=of y6]             {$q$};
    \draw    (y0.north west) -- ++(-0.4cm,0)
             (y0.south west) -- ++(-0.4cm,0)
             (y7.north east) -- ++(0.4cm,0)
             (y7.south east) -- ++(0.4cm,0);

    \node (cm) [draw=none, right=of x7, xshift=6mm] {\cmark};

  \end{tikzpicture}
  \caption{Computation of an ACA which recognizes $L = \{ 01 \}^+$.
    The input words are $010101 \in L$ and $001010 \not\in L$, respectively.}
  \label{fig_ACA_0ast}
\end{figure}

\section{First Observations} \label{sec_first_obs}

This section recalls results on sublinear-time ACA computation (i.e., $\ACA(t)$
where $t \in o(n)$) from \cite{kim_ACA, sommerhalder_ACA, Ibarra_ACA} and
provides some additional remarks.
We start with the constant-time case (i.e., $\ACA(O(1))$).
Here, the connection between scanners and ACAs is apparent:
If an ACA accepts an input $w$ in time $\tau = \tau(w)$, then $w$ can be
verified by a scanner with an input span of $2\tau + 1$ symbols and using the
predicate induced by the local transition function of the ACA (i.e., the
predicate is true if and only if the symbols read correspond to $N_\tau(z)$ for
some cell $z$ in the initial configuration and $z$ is accepting after $\tau$
steps).

Constant-time ACA computation has been studied in
\cite{kim_ACA,sommerhalder_ACA}.
Although in \cite{kim_ACA} we find a characterization based on a hierarchy over
$\SLT$, the acceptance condition there differs slightly from that in
Definition~\ref{def_ACA}; in particular, the automata there run for a number of
steps which is fixed for each automaton, and the outcome is evaluated (only) in
the final step.
In contrast, in \cite{sommerhalder_ACA} we find the following, where $\SLT_\lor$
denotes the closure of $\SLT$ under union:

\begin{theorem}[\cite{sommerhalder_ACA}] \label{thm_ACA_SLT}
  $\ACA(O(1)) = \SLT_\lor$.
\end{theorem}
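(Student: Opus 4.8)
The plan is to prove the two inclusions separately, using the scanner characterization of $\SLT$ as the bridge in both directions.

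For $\ACA(O(1)) \subseteq \SLT_\lor$, fix an ACA $C$ accepting $L$ within a constant number $T$ of steps. Since $C$ never reaches an $A$-final configuration on a rejected input, $w \in L$ holds if and only if there is some $\tau \le T$ at which \emph{every} active cell is simultaneously in an accept state; hence $L = \bigcup_{\tau=0}^{T} L_\tau$, where $L_\tau = \{\, w : \Delta^\tau(c_0(w)) \text{ has all active cells accepting} \,\}$. I would then argue each $L_\tau$ is strictly locally testable: the state of a cell $z$ after $\tau$ steps depends only on $N_\tau(z)$ in $c_0(w)$, so by the observation preceding the theorem, $L_\tau$ is exactly the language accepted by a scanner of span $2\tau + 1$ whose predicate is ``the central cell of the window is accepting after $\tau$ steps''. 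Reading this predicate off $\delta$ yields the sets $\mu \subseteq \Sigma^{2\tau+1}$ (interior windows), $\pi$ (windows meeting the left marker, i.e.\ the left-boundary cells $0, \dots, \tau-1$), and $\sigma$ (right-boundary cells) witnessing $L_\tau \in \SLT$. As $T$ is constant this is a finite union, so $L \in \SLT_\lor$.

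For the converse I would first treat a single $L \in \SLT$ with parameter $k$ and sets $\pi, \sigma, \mu$. The ACA runs for $k-1$ steps, during which every cell accumulates the contents of its radius-$(k-1)$ neighborhood; at step $k-1$ a cell marks itself accepting exactly if the windows it is responsible for pass the scanner predicate (interior cells check their length-$k$ infix against $\mu$; cells that see the left, resp.\ right, marker check the relevant prefix against $\pi$, resp.\ suffix against $\sigma$; short inputs, whose entire content fits inside a single neighborhood, are decided outright). By construction all cells accept simultaneously at step $k-1$ if and only if $w$ satisfies the prefix, infix, and suffix conditions, i.e.\ $w \in L$; this takes constant time.

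The main obstacle is the passage from a single $\SLT$ language to a finite union, i.e.\ closure of $\ACA(O(1))$ under union. Here the naive product construction fails: because an $A$-final configuration requires \emph{all} cells to accept at once, declaring a cell accepting when it accepts in either factor makes ``all cells accept'' mean only that each cell accepts in \emph{some} factor (the choice varying per cell), which is strictly weaker than $w \in L_1 \cup L_2$, while taking the conjunction yields the intersection. The resolution I would use is to \emph{stagger the checks in time}: given $L = \bigcup_{i=1}^{m} L_i$ with $L_i \in \SLT$ of window $k_i$, pick distinct target times $\tau_1 < \dots < \tau_m$ with $\tau_i \ge k_i - 1$, and have each cell carry a bounded clock together with its accumulated neighborhood. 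At step $\tau_i$ \emph{every} cell tests the \emph{same} predicate, namely the one for $L_i$, and becomes accepting precisely if its local $L_i$-condition holds; at all other steps cells stay non-accepting. Then the configuration is $A$-final at step $\tau_i$ exactly when $w \in L_i$, so the first all-accepting step occurs if and only if $w \in \bigcup_i L_i$, and it occurs within the constant $\tau_m$ steps. In effect, staggering converts the ``disjunction over languages'' into ``the earliest global step at which a uniform $\SLT$ test succeeds'', which is precisely what ACA acceptance measures; the remaining short-word and state-bookkeeping details are routine.
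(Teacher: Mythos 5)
Your argument is correct, but note that the paper itself contains \emph{no} proof of this statement: Theorem~\ref{thm_ACA_SLT} is quoted from \cite{sommerhalder_ACA}, so the comparison is with the surrounding material rather than with a proof in the text. Your two halves in fact reassemble exactly that material. The inclusion $\ACA(O(1)) \subseteq \SLT_\lor$ is the scanner correspondence stated informally at the start of Section~\ref{sec_first_obs} (a span-$(2\tau+1)$ window determines the state of its central cell at time $\tau$, with $\pi$ and $\sigma$ absorbing windows that touch the inactive border), made precise by your slicing $L = \bigcup_{\tau \le T} L_\tau$ by acceptance time; this decomposition is sound because acceptance only requires \emph{some} $A$-final configuration within the constant bound, so no minimality issue arises. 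For the converse, your diagnosis of why the naive product construction fails is exactly the point the paper makes after Proposition~\ref{prop_waca_union} (there in the context of intersection), and your staggering device is the same mechanism as the paper's proof of that proposition: the round-robin register $r$ there serves to ensure that at any given step \enquote{all cells accepting} refers to a single factor, which is precisely your \enquote{one uniform test per time step} principle. You exploit the constant time bound to schedule the $m$ tests at distinct constant times $\tau_1 < \dots < \tau_m$ instead of interleaving simulations indefinitely; this is cleaner in the constant-time setting, though unlike the paper's parity trick it does not generalize to unbounded $t$. One detail to watch: for words shorter than the window length, the literal reading of Definition~\ref{def_SLT} (with $\mu \subseteq \Sigma^k$ and $I_k(w) = \{w\}$ for $|w| < k$) is delicate, and your handling --- short inputs decided outright by a cell seeing both markers on the ACA side, $\pi, \sigma \subseteq \Sigma^{\le k}$ and the scanner's end markers on the language side --- is the right way to make the two models agree on this edge case.
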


Thus, $\ACA(O(1))$ is closed under union.
In fact, more generally:

\begin{proposition} \label{prop_waca_union}
  For any $t\colon \N_+ \to \N_+$, $\ACA(O(t))$ is closed under union.
\end{proposition}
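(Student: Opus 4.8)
The plan is to take ACAs $C_1 = (Q_1, \delta_1, \Sigma)$ and $C_2 = (Q_2, \delta_2, \Sigma)$ with accept sets $A_1, A_2$ and time complexity bounded by $c_1 t$ and $c_2 t$, respectively, and to build a single ACA $C$ for $L(C_1) \cup L(C_2)$ that runs in time $O(t)$. The obvious first attempt is a product automaton whose cells carry a state in (essentially) $Q_1 \times Q_2$, simulating both machines in parallel, with accept set $(A_1 \times Q_2) \cup (Q_1 \times A_2)$. This is wrong, and understanding why is the heart of the matter: an ACA accepts exactly when \emph{all} active cells are simultaneously accepting, so acceptance of $C_i$ is a global conjunction $\bigwedge_z [\text{cell } z \in A_i]$ of local conditions, and the product's accept set turns the desired \emph{disjunction of conjunctions} into a \emph{conjunction of disjunctions}. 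Concretely, at some step part of the cells could be accepting on the first track while the rest are accepting on the second, producing an all-accepting configuration even though neither $C_1$ nor $C_2$ accepts. \textbf{This \enquote{mixing} of the two acceptance conditions is the main obstacle}, and it cannot be repaired by computing and broadcasting a single global verdict, since that would take $\Omega(n)$ time.

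The idea I would use to avoid mixing is to exploit the one form of disjunction an ACA provides for free, namely the existential quantifier over the step $\tau$ built into the acceptance condition (a word is accepted if \emph{some} reachable configuration is all-accepting). I equip every cell with one extra bit $b$ that merely flips each step; since all active cells start synchronized and update simultaneously, at every step $\tau$ they agree on $b = \tau \bmod 2$. I then gate the accept set by this bit: a cell counts as accepting iff ($b = 0$ and its first track lies in $A_1$) or ($b = 1$ and its second track lies in $A_2$). Because $b$ is globally uniform at any fixed step, the all-accepting test reduces to \enquote{all first tracks in $A_1$} at even steps and to \enquote{all second tracks in $A_2$} at odd steps, so no configuration can ever mix the two tracks. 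To ensure a genuine acceptance is seen at the right parity, I let both simulations advance only every other step (reusing $b$ as the stutter control), so the configuration of $C_i$ after $s$ logical steps is displayed at the two consecutive real times $2s$ and $2s+1$, covering both parities; an acceptance of $C_1$ at logical time $s_1$ is then witnessed at the even real time $2 s_1$, and one of $C_2$ at $s_2$ at the odd real time $2 s_2 + 1$.

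With this construction the two inclusions become routine. If $w \in L_1$, then $C_1$ reaches an $A_1$-final configuration at some logical step $s_1 \le c_1 t(|w|)$, which $C$ displays and accepts at real step $2 s_1$, and symmetrically for $w \in L_2$; hence the time complexity is bounded by $\max(2 c_1,\, 2 c_2 + 1)\, t = O(t)$. Conversely, any all-accepting configuration of $C$ occurs at a step of a fixed parity and therefore certifies a genuine all-$A_1$ (resp. all-$A_2$) configuration of the corresponding simulation, so $L(C) \subseteq L_1 \cup L_2$. The remaining points — keeping the inactive state $q$ inert, feeding $q$ to $\delta_1$ and $\delta_2$ at the borders exactly as in the original machines, and checking that $b$ stays globally synchronized among the active cells — are immediate and leave the state count finite (of order $|Q_1| \cdot |Q_2|$). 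I note in passing that for $t \in O(1)$ the statement already follows from Theorem~\ref{thm_ACA_SLT}, since $\SLT_\lor$ is closed under union by definition; the construction above is what makes the claim hold uniformly for every $t$.
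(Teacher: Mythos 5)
Your proposal is correct and is essentially the paper's own construction: the paper likewise runs a product simulation of $C_1$ and $C_2$ with a globally synchronized round flag $r \in \{1,2\}$ (your parity bit $b$) that time-multiplexes the two simulations and gates the accept set so that any all-accepting configuration certifies acceptance by exactly one of the two machines, at the cost of a factor-$2$ slowdown. Your explicit discussion of why the naive accept set $(A_1 \times Q_2) \cup (Q_1 \times A_2)$ fails matches the remark the paper makes (in the context of intersection) right after its proof, so there is nothing substantive to change.
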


\myomit{
  \begin{proof}
    Let $L_1$ and $L_2$ be languages accepted by the ACAs $C_1$ and $C_2$,
    respectively, in $O(t)$ time.
    Furthermore, let $Q_i$ (resp., $Q_i^a$) denote the set of states (resp.,
    accepting states) of $C_i$.
    We construct an ACA $C$ which accepts $L_1 \cup L_2$ as follows:
    $C$ simulates $C_1$ and $C_2$ and switches between the two simulations at
    every step.
    Each cell maintains components $q_1 \in Q_1$, $q_2 \in Q_2$, and $r \in \{
    1,
    2 \}$, where $r$ indicates which of the two simulations is to be updated
    next;
    that is, at each step, a cell next updates $q_r$ according to the local
    transition function of $C_r$ and the $q_r$ states of its neighbors.
    At the start of the computation, all cells simultaneously initialize $r =
    1$.
    The accept states of $C$ are $Q_1 \times Q_2^a \times \{ 1 \} \cup Q_1^a
    \times Q_2 \times \{ 2 \}$.
    Thus, $L(C) = L_1 \cup L_2$ and the time complexity of $C$ is in $O(2t) =
    O(t)$.
  \end{proof}
}

$\ACA(O(1))$ is closed under intersection \cite{sommerhalder_ACA}.
It is an open question whether $\ACA(O(t))$ is also closed under intersection
for every $t \in o(n)$.
\myomit{In particular, note that taking the construction of the proof above
  and setting, for example, $Q_1^a \times Q_2^a \times \{ 1 \}$ as the set of
  accept states is insufficient (as it is not guaranteed that $C_1$ and $C_2$
  accept \emph{at the same time}).}

Moving beyond constant time, in \cite{Ibarra_ACA} we find the following:

\begin{theorem}[\cite{Ibarra_ACA}] \label{thm_ACA_log}
  For $t \in o(\log n)$, $\ACA(t) \subseteq \REG$.
\end{theorem}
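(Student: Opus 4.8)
The plan is to show that any $L \in \ACA(t)$ with $t \in o(\log n)$ in fact already lies in $\ACA(O(1))$, so that $L \in \SLT_\lor \subseteq \REG$ by Theorem~\ref{thm_ACA_SLT}. Fix an ACA $C = (Q, \delta, \Sigma)$ with accept set $A$ witnessing $L \in \ACA(t)$, and write $s = |Q|$. The starting point is the locality already noted above: the state of a cell $z$ after $\tau$ steps depends only on the restriction of the initial configuration to $N_\tau(z)$, so the assertion that $\Delta^\tau(c_0(w))$ is $A$-final is a conjunction, over all active cells $z$, of a predicate reading only the length-$(2\tau+1)$ infix of $w$ centered at $z$ (the end markers taking care of the prefix and suffix). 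In particular, if $w \in L$ is accepted at time $\tau = \tau(w) \le t(|w|)$, then any two active cells whose centered length-$(2\tau+1)$ infixes coincide are in the same (accepting) state at time $\tau$.

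First I would exploit the sub-logarithmic time bound through a pigeonhole count. Since $t \in o(\log n)$, for every large enough $n$ we have $s^{2t(n)+1} = n^{o(1)} = o(n)$; hence a word $w \in L$ of length $n$ accepted at time $\tau \le t(n)$ must contain two interior cells $i < j$ whose centered length-$(2\tau+1)$ infixes agree.

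The technical heart is a cut-and-paste lemma based on light cones: excising from $w$ the segment lying strictly between two such matching windows yields a shorter word still accepted at time $\tau$, and, symmetrically, duplicating that segment yields a longer word still accepted at time $\tau$. I would prove this by checking that every cell of the spliced word has, at time $\tau$, exactly the same length-$(2\tau+1)$ initial window as some cell of $w$: cells far from the splice inherit their window verbatim, while a cell straddling the splice sees, to its left and to its right, the two halves of the common window centered at $i$ and $j$, so its window again coincides with one that was accepting in $w$. As $C$ witnesses $L \in \ACA(t)$, both spliced words again lie in $L$.

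Iterating the deletion reduces every $w \in L$ to a word in $L$ having no repeated interior window, whose length $m$ therefore satisfies $m \le s^{2t(m)+1} + 2t(m) = m^{o(1)} + o(\log m)$, which forces $m \le M$ for some absolute constant $M$. Together with the membership-preserving reverse (insertion) operation, this shows that belonging to $L$ is invariant under inserting or deleting these bounded blocks and is governed solely by the bounded collection of short windows present; formalizing this collapses $L$, up to the finitely many words of length below the threshold, to a constant-time acceptance condition, so that $L \in \ACA(O(1)) = \SLT_\lor \subseteq \REG$. I expect the main obstacle to be the correctness of the cut-and-paste step, namely pinning down exactly which cells near a splice might fail to be accepting and verifying that the shared window rules this out, since everything else is counting; a secondary delicate point is that the acceptance time $\tau$ itself grows with $n$, which is precisely what the pigeonhole bound $s^{2t(n)+1} = o(n)$ is there to keep under control.
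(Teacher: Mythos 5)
First, a framing point: the paper does not prove this statement itself---it cites it from \cite{Ibarra_ACA}---but it proves the stronger Theorem~\ref{thm_ACA_olog} ($\ACA(t) = \ACA(O(1))$ for $t \in o(\log n)$), which together with Theorem~\ref{thm_ACA_SLT} yields it. Your plan is exactly that route: the counting bound $|Q|^{2t(n)+1} = o(n)$, a surgery on repeated radius-$\tau$ windows, and the locality lemma; your cut-and-paste lemma is the cycle-contraction step that the paper phrases via De Bruijn graphs. One local slip: \enquote{excising the segment lying strictly between the two matching windows} keeps \emph{both} copies of the common window, and the junction then produces rotations of that window which need not occur in $w$; the correct splice---which your later sentence about \enquote{the two halves of the common window centered at $i$ and $j$} in fact describes---cuts at the window \emph{centers}, deleting $w(i{+}1)\cdots w(j)$ so that exactly one copy survives, and then every radius-$\tau$ window of the spliced word literally equals the window of a corresponding cell of $w$.

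The genuine gap is in the last step. Iterated deletion only yields $I_{2\tau+1}(w') \subseteq I_{2\tau+1}(w)$, and Lemma~\ref{lem_locality_ACA} transfers acceptance only \emph{from} the word with the larger infix set \emph{to} the one with the smaller set. So from your bounded companion $w_0 \in L$, accepted within some constant $T$ steps, you cannot conclude that $C$ accepts the original $w$ within $T$ steps: that direction of the lemma requires $I(w) \subseteq I(w_0)$, i.e., \emph{equality} of infix sets, as in Lemma~\ref{lem_locality_ACA_iff}. Your reverse (insertion) operation does not repair this: re-inserting a deleted segment reintroduces windows absent from the shorter word, so the lemma's hypothesis fails in the direction you need, and your normal form \enquote{no repeated interior window} is in fact incompatible with preserving the window set. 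This is precisely where the paper's proof of Theorem~\ref{thm_ACA_olog} adds an idea your proposal is missing: rather than greedily contracting cycles, it forms the induced subgraph $G'$ of the De Bruijn graph on \emph{all} windows visited by $w$ and replaces the walk by one with the same endpoints that visits \emph{every} node of $G'$, of length at most $m^2 \le |Q|^{2\kappa}$. This keeps $p$, $I$, and $s$ exactly equal, so locality runs in both directions, and an induction on $|w|$ (with the acceptance time of each intermediate word re-bounded by $t$ at its own, smaller length, and with the bookkeeping around the threshold $n_0$) completes the collapse to $\ACA(O(1))$. As written, your argument shows the shrunken words lie in $L$, but not that the original long words are accepted in constant time, so neither the collapse nor the regularity conclusion follows without the window-set-preserving replacement path.
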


In \cite{Ibarra_ACA} we find an example for a non-regular language in
$\ACA(O(\log n))$ which is essentially a variation of the language
\[
\BIN = \{ \bin_k(0) \# \bin_k(1) \# \cdots \# \bin_k(2^k - 1) \mid k \in \N_+
\}
\]
where $\bin_k(m)$ is the $k$-digit binary representation of $m \in \{ 0,
\dots, 2^k - 1 \}$.%

To illustrate the ideas involved, we present an example related to $\BIN$
(though it results in a different time complexity) and which is also useful in
later discussions in Section~\ref{sec_DACA}.
Let $w_k(i) = 0^i 1 0^{k-i-1}$ and consider the language
\[
\IDMAT = \{ w_k(0) \# w_k(1) \# \cdots \# w_{k}(k-1) \mid k \in \N_+ \}
\]
of all identity matrices in line-for-line representations, where the lines are
separated by $\#$ symbols.%
\footnote{Alternatively, one can also think of $\IDMAT$ as a (natural) problem
  on graphs presented in the adjacency matrix representation.}

We now describe an ACA for $\IDMAT$; the construction closely follows the
aforementioned one for $\BIN$ found in \cite{Ibarra_ACA} (and the difference in
complexity is only due to the different number and size of blocks in the words
of $\IDMAT$ and $\BIN$).
Denote each group of cells initially containing a (maximally long) $\binalph^+$
substring of $w \in \IDMAT$ by a \emph{block}.
Each block of size $b$ propagates its contents to the neighboring blocks (in
separate registers); using a textbook CA technique, this requires exactly $2b$
steps.
Once the strings align, a block initially containing $w_k(i)$ verifies it has
received $w_k(i-1)$ and $w_k(i+1)$ from its left and right neighbor blocks (if
either exists), respectively.
The cells of a block and its delimiters become accepting if and only if the
comparisons are successful and there is a single $\#$ between the block and its
neighbors.
This process takes linear time in $b$; since any $w \in \IDMAT$ has
$O(\sqrt{|w|})$ many blocks, each with $b \in O(\sqrt{|w|})$ cells, it follows
that $\IDMAT \in \ACA(O(\sqrt{n}))$.

To show the above construction is time-optimal, we use the following
observation, which is also central in proving several other results in this
paper:

\begin{lemma} \label{lem_locality_ACA}
  Let $C$ be an ACA, and let $w$ be an input which $C$ accepts in (exactly)
  $\tau = \tau(w)$ steps.
  Then, for every input $w'$ such that $p_{2\tau}(w) = p_{2\tau}(w')$, $I_{2\tau
    + 1}(w') \subseteq I_{2\tau + 1}(w)$, and $s_{2\tau}(w) = s_{2\tau}(w')$,
  $C$ accepts $w'$ in at most $\tau$ steps.
\end{lemma}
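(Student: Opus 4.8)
The plan is to exploit the bounded radius of information flow in a cellular automaton: after $\tau$ steps, the state of any cell $z$ depends only on the initial contents of its extended neighborhood $N_\tau(z)$, i.e.\ on the symbols at positions $z-\tau,\dots,z+\tau$ (where positions outside $\{0,\dots,|w|-1\}$ carry the inactive state $q$). I would first make this precise as a locality observation: for each $\tau\in\N_0$, the value $\Delta^\tau(c_0)(z)$ is a function only of $c_0(z-\tau),\dots,c_0(z+\tau)$, which follows by a trivial induction on $\tau$ from the fact that $\delta$ reads only a cell and its two immediate neighbors. The governing window for cell $z$ is thus the length-$(2\tau+1)$ infix of the (padded) input centered at $z$.

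Next I would translate the hypotheses on $w'$ into a statement that every active cell of $w'$ sees, after $\tau$ steps, exactly the same length-$(2\tau+1)$ window as some active cell of $w$ did. The three conditions $p_{2\tau}(w)=p_{2\tau}(w')$, $I_{2\tau+1}(w')\subseteq I_{2\tau+1}(w)$, and $s_{2\tau}(w)=s_{2\tau}(w')$ are designed precisely to cover the three cases for a cell $z'$ in $w'$: if $z'$ is within distance $\tau$ of the left end, its window (including the $q$-padding on the left) is determined by the length-$2\tau$ prefix, which agrees with that of $w$; symmetrically for cells near the right end via the suffix condition; and for a cell in the interior, whose window lies entirely inside the active region, the window is a length-$(2\tau+1)$ infix of $w'$, hence by the infix-containment hypothesis also occurs in $w$ and therefore arose as the window of some active cell $z$ of $w$. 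In every case the neighborhood contents of $z'$ (in $w'$) after $\tau$ steps coincide with those of a matching cell $z$ (in $w$).

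Since $C$ accepts $w$ in exactly $\tau$ steps, $\Delta^\tau(c_0(w))$ is $A$-final, meaning every active cell of $w$ is in an accept state at time $\tau$; the cells are active precisely because $\delta$ fixes $q$ as a quiescent, non-spreading state (from the CA-computation definition, a cell is in state $q$ at time $\tau$ iff it was initially in $q$). By the locality observation together with the window-matching from the previous step, each active cell $z'$ of $w'$ is in the \emph{same} state at time $\tau$ as its matching active cell $z$ of $w$, hence is accepting; and the inactive cells of $w'$ remain in $q$. Therefore $\Delta^\tau(c_0(w'))$ consists only of states in $A\cup\{q\}$, so $w'$ is accepted in \emph{at most} $\tau$ steps (the acceptance time is the minimal such $\tau$, which could be smaller for $w'$).

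The main obstacle, and the only place requiring genuine care rather than bookkeeping, is the boundary handling in the interior-versus-ends case analysis. The asymmetry between the prefix/suffix conditions (length $2\tau$) and the infix condition (length $2\tau+1$) is deliberate and must be matched exactly to the geometry: a cell within distance $\tau-1$ of an endpoint has a window that straddles the active region and the $q$-padding, so its behavior is pinned down not by an interior infix but by the prefix/suffix agreement, and one must verify that length $2\tau$ of prefix/suffix agreement suffices to fix all such boundary windows (including the exact amount of padding, which encodes the input length near that end). I would check that the windows of the leftmost $\tau$ active cells are each determined by $p_{2\tau}(w)$, and dually on the right, so that no boundary cell falls through the cracks; once that alignment is confirmed, the rest is the straightforward induction and case check sketched above.
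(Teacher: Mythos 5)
Your proposal is correct and takes essentially the same route as the paper's own proof: both rely on the radius-$\tau$ locality of $\Delta^\tau$ and the identical three-way case split on a cell's window (all inactive, mixed boundary window pinned down by the length-$2\tau$ prefix/suffix agreement, and purely active window matched via $I_{2\tau+1}(w') \subseteq I_{2\tau+1}(w)$ to an accepting cell of $w$), concluding that $\Delta^\tau(c_0(w'))$ is $A$-final and hence $w'$ is accepted in at most $\tau$ steps by minimality of the acceptance time. Your closing worry about the $2\tau$ versus $2\tau+1$ boundary geometry resolves exactly as you suspect (cells $0,\dots,\tau-1$ have windows determined by $p_{2\tau}$ together with the padding forced by their position, dually on the right), which is precisely how the paper's case analysis is organized.
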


The lemma is intended to be used with $\tau < \frac{|w|}{2}$ since otherwise $w
= w'$.
It can be used, for instance, to show that $\SOMEONE = \{ w \in \binalph^+ \mid
|w|_1 \ge 1 \}$ is not in $\ACA(t)$ for any $t \in o(n)$ (e.g., set $w = 0^k 1
0^k$ and $w' = 0^{2k+1}$ for large $k \in \N_+$).
It follows $\REG \not\subseteq \ACA(t)$ for $t \in o(n)$.

\myomit{
  \begin{proof}
    By definition, $C$ does not accept $w'$ in strictly less than $\tau$ steps.
    Let $A$ be the set of accept states of $C$, and let $c_0$ and $c_0'$ denote
    the initial configurations for $w$ and $w'$, respectively.
    Given $w \in L(C)$, we prove $c_\tau' = \Delta^\tau(c_0')$ is $A$-final.

    Let $i \in \Z$.
    If all of $c_0'|N_\tau(i)$ is inactive, then cell $i$ is also inactive in
    $c_\tau'$ (i.e., $c_\tau'(i) = q$).
    If $c_0'|N_\tau(i)$ contains both inactive and active states, then $i <
    \tau$
    or $i \ge |w'| - \tau$, in which case $p_{2\tau}(w) = p_{2\tau}(w')$ and
    $s_{2\tau}(w) = s_{2\tau}(w')$ imply $c_0'|N_\tau(i) = c_0|N_\tau(i)$.
    Finally, if $c_0'|N_\tau(i)$ is purely active, $c_0'|N_\tau(i)$ (seen as a
    word over the input alphabet of $C$) is in $I_{2\tau+1}(w') \subseteq
    I_{2\tau+1}(w)$; as a result, there is $j \in \Z$ with $\tau \le j < |w| -
    \tau$ such that $c_0'|N_\tau(i) = c_0|N_\tau(j)$.
    In both the previous cases, because $c_\tau$ is $A$-final, it follows
    $c_\tau'(i) \in A$, implying $c_\tau'$ is $A$-final.
  \end{proof}
}

Since the complement of $\SOMEONE$ (respective to $\binalph^+$) is $\{0\}^+$
and $\{0\}^+ \in \ACA(O(1))$ (e.g., simply set $0$ as the ACA's accepting
state), $\ACA(t)$ is not closed under complement for any $t \in o(n)$.
Also, $\SOMEONE$ is a regular language and $\BIN \in \ACA(O(\log n))$ is not,
so we have:

\begin{proposition} \label{prop_ACA_REG}
  For $t \in \Omega(\log n) \cap o(n)$, $\ACA(t)$ and $\REG$ are incomparable.
\end{proposition}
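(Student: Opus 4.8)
The statement asserts incomparability, so the plan is to establish the two non-inclusions separately, namely $\REG \not\subseteq \ACA(t)$ and $\ACA(t) \not\subseteq \REG$. Conveniently, witnesses for both have already surfaced in the discussion preceding the statement, so the proof reduces to verifying that the relevant time bounds fit the prescribed range $\Omega(\log n) \cap o(n)$.

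For $\REG \not\subseteq \ACA(t)$ I would take the regular language $\SOMEONE$ and make explicit the application of Lemma~\ref{lem_locality_ACA} that was only hinted at. Assume some ACA $C$ accepts $\SOMEONE$ in time $t \in o(n)$, set $w = 0^k 1 0^k \in \SOMEONE$, and let $\tau = \tau(w)$ be its acceptance time. Since $t \in o(n)$, for all large $k$ we have $2\tau + 1 \le k$, whence $p_{2\tau}(w) = 0^{2\tau} = p_{2\tau}(w')$, $s_{2\tau}(w) = 0^{2\tau} = s_{2\tau}(w')$, and $I_{2\tau+1}(w') = \{ 0^{2\tau+1} \} \subseteq I_{2\tau+1}(w)$ for $w' = 0^{2k+1}$. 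The lemma then forces $C$ to accept $w' \notin \SOMEONE$, a contradiction. As $\Omega(\log n) \cap o(n) \subseteq o(n)$, this rules out membership for every $t$ in the stated range, and since $\SOMEONE \in \REG$ the first non-inclusion follows.

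For $\ACA(t) \not\subseteq \REG$ I would use $\BIN$, which is non-regular and, by the construction recalled above, lies in $\ACA(O(\log n))$. Since $t \in \Omega(\log n)$, I would conclude $\BIN \in \ACA(O(\log n)) \subseteq \ACA(t)$ using the monotonicity $\ACA(t_1) \subseteq \ACA(t_2)$ for $t_1 \le t_2$, so that $\BIN$ witnesses the second non-inclusion and, combined with the first, yields incomparability.

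The one point deserving care — and essentially the only genuine obstacle — is the inclusion $\ACA(O(\log n)) \subseteq \ACA(t)$: this is \emph{not} a direct instance of pointwise monotonicity, because the constant hidden in the $O(\log n)$ time bound of the $\BIN$ automaton need not be dominated by the constant guaranteed by $t \in \Omega(\log n)$. I would close this gap with a constant-factor speedup of the $\BIN$ automaton, grouping a fixed number of cells into super-cells so that each super-step simulates several steps of the original; one declares a super-cell accepting exactly when all of its constituent cells are accepting, which preserves the all-cells-accepting condition and hence the accepted language while reducing the time bound below any prescribed $c \log n$. After this adjustment monotonicity applies verbatim, and the remainder of the argument is routine.
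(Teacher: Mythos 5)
Your first two paragraphs reproduce exactly the paper's argument: the paper establishes $\REG \not\subseteq \ACA(t)$ via $\SOMEONE$ with the very same witnesses $w = 0^k 1 0^k$, $w' = 0^{2k+1}$ and Lemma~\ref{lem_locality_ACA}, and gets $\ACA(t) \not\subseteq \REG$ from $\BIN \in \ACA(O(\log n))$ being non-regular. Up to that point your proof is correct and matches the paper (which, for what it is worth, does not itself address the constant-factor issue you raise).

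The problem is your third paragraph: the proposed super-cell speedup does not exist in this model, so your repair fails. The paper fixes the standard radius-$1$ neighborhood and places the input one symbol per cell; after $\tau$ steps, the state of a cell is a function of its radius-$\tau$ initial neighborhood, so a CA cannot simulate several steps of another in a single step --- that would require information to travel faster than one cell per step. The classical CA linear-speedup theorems first compress the input into super-cells, which costs $\Omega(n)$ time and is unaffordable in the sublinear regime; there is no step at which you get to ``group a fixed number of cells,'' because the grouping itself is a computation of the automaton. In fact a constant-factor speedup is provably false here, and $\BIN$ itself witnesses this: by Lemma~\ref{lem_locality_ACA}, acceptance of $w_k = \bin_k(0)\#\cdots\#\bin_k(2^k-1)$ within $\tau$ steps is governed by windows of length $2\tau+1$, and for $2\tau + 1$ well below the block length $k \approx \log n$ one can build locally consistent impostors such as $0^k (\# 0^k)^m \# 1^k$, whose length-$(2\tau+1)$ prefix, suffix, and infixes all occur in $w_k$; hence $\BIN \notin \ACA(\varepsilon \log n)$ for small $\varepsilon > 0$, which directly contradicts the conclusion of your speedup. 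The correct patch changes the \emph{witness language}, not the machine: re-encode the counter values over the alphabet $\{0,\dots,2^m-1\}$ (i.e., base $2^m$), so that blocks have length $k/m$ while the word length stays $\approx 2^k$; the same block-comparison construction then runs in time $O(\frac{1}{m}\log n)$, yielding for every constant $c > 0$ a non-regular language in $\ACA(c \log n) \subseteq \ACA(t)$, which closes the gap for every $t \in \Omega(\log n)$.
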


If the inclusion of infixes in Lemma~\ref{lem_locality_ACA} is strengthened to
an equality, one may apply it in both directions and obtain the following
stronger statement:

\begin{lemma} \label{lem_locality_ACA_iff}
  Let $C$ be an ACA with time complexity bounded by $t\colon \N_+ \to \N_0$
  (i.e., $C$ accepts any input of length $n$ in at most $t(n)$ steps).
  Then, for any two inputs $w$ and $w'$ with $p_{2\mu}(w) = p_{2\mu}(w')$,
  $I_{2\mu + 1}(w) = I_{2\mu + 1}(w')$, and $s_{2\mu}(w) = s_{2\mu}(w')$ where
  $\mu = \max\{t(|w|), t(|w'|)\}$, we have that $w \in L(C)$ if and only if $w'
  \in L(C)$.
\end{lemma}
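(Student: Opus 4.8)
The plan is to reduce the statement to Lemma~\ref{lem_locality_ACA}, which already transfers acceptance in one direction but phrases its hypotheses in terms of the \emph{exact} acceptance time of a word rather than the global bound $\mu$. I first observe that $\mu = \max\{t(|w|), t(|w'|)\}$ is symmetric in $w$ and $w'$, and that the three hypotheses on prefixes, infixes (an equality), and suffixes are likewise symmetric. Hence it suffices to prove the single implication $w \in L(C) \Rightarrow w' \in L(C)$, the converse following by exchanging the roles of $w$ and $w'$.

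So assume $w \in L(C)$ and let $\tau = \tau(w)$ be the exact number of steps in which $C$ accepts $w$, so that $\tau \le t(|w|) \le \mu$. To invoke Lemma~\ref{lem_locality_ACA} with this $\tau$, I must verify its three premises $p_{2\tau}(w) = p_{2\tau}(w')$, $I_{2\tau + 1}(w') \subseteq I_{2\tau + 1}(w)$, and $s_{2\tau}(w) = s_{2\tau}(w')$. Only the $\mu$-versions of these are assumed, so the crux is to show that local agreement is \emph{monotone}: agreement at window size $2\mu$ (resp.\ $2\mu + 1$) forces agreement at every smaller window size $2\tau$ (resp.\ $2\tau + 1$).

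For prefixes and suffixes this is immediate from the identities $p_a(u) = p_a(p_b(u))$ and $s_a(u) = s_a(s_b(u))$ for $a \le b$, which hold for all word lengths under the convention $p_{k'}(u) = s_{k'}(u) = u$ for $k' \ge |u|$; thus $p_{2\tau}(w) = p_{2\tau}(p_{2\mu}(w)) = p_{2\tau}(p_{2\mu}(w')) = p_{2\tau}(w')$, and symmetrically for suffixes. The infix part is the only one requiring a genuine argument. Here I would use the factorization identity $I_a(u) = \bigcup_{v \in I_b(u)} I_a(v)$, valid for every word $u$ and all $a \le b$: every length-$a$ factor of $u$ lies inside some length-$b$ factor (a length-$b$ window can always be fitted around it when $|u| \ge b$, and the convention $I_{k'}(u) = \{u\}$ for $k' \ge |u|$ covers the short case uniformly). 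Applying this with $a = 2\tau + 1$ and $b = 2\mu + 1$ to both $w$ and $w'$, and substituting $I_{2\mu + 1}(w) = I_{2\mu + 1}(w')$, yields $I_{2\tau + 1}(w) = I_{2\tau + 1}(w')$, which in particular gives the required inclusion.

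With all three premises in hand, Lemma~\ref{lem_locality_ACA} gives that $C$ accepts $w'$ in at most $\tau$ steps, so $w' \in L(C)$; by the symmetry noted at the outset this proves the lemma. I expect the only real obstacle to be the verification of the factorization identity $I_a(u) = \bigcup_{v \in I_b(u)} I_a(v)$ — specifically, fitting a length-$b$ window around an arbitrary length-$a$ factor and handling the boundary and short-word cases through the stated conventions — while the prefix/suffix monotonicity and the final application of Lemma~\ref{lem_locality_ACA} are routine.
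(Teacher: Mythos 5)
Your proposal is correct and takes essentially the same route as the paper's proof: both derive the result by applying Lemma~\ref{lem_locality_ACA} in each direction (your symmetry reduction is just the paper's explicit \enquote{conversely} step), after observing that agreement of prefixes, suffixes, and infix sets at window size $2\mu$ (resp.\ $2\mu+1$) implies agreement at every smaller window size $2\tau \le 2\mu$. The only difference is one of detail: the paper simply asserts this monotonicity for all three kinds of agreement, whereas you verify it, including the infix case via the factorization identity $I_a(u) = \bigcup_{v \in I_b(u)} I_a(v)$, which indeed holds under the paper's conventions for short words.
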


\myomit{
\begin{proof}
  The key idea is that, for any $k \in \N_0$, $p_k(w) = p_k(w')$ implies
  $p_{k'}(w) =  p_{k'}(w')$ for every $k' \le k$; the same is true for $s_k(w)
  = s_k(w')$ and $I_k(w) = I_k(w')$.
  Hence, if $w \in L(C)$ and $C$ accepts $w$ in exactly $\tau = \tau(w)$ steps,
  by Lemma~\ref{lem_locality_ACA} we have that $C$ accepts $w'$ in at most $\tau
  \le \mu$ many steps.
  Conversely, if $w' \in L(C)$ and $C$ accepts $w'$ in exactly $\tau = \tau(w')$
  steps, then using Lemma~\ref{lem_locality_ACA} again (with $w$ in place of
  $w'$ and vice-versa) we obtain that $C$ accepts $w$ in at most $\tau \le \mu$
  many steps.
\end{proof}
}

Finally, we can show our ACA for $\IDMAT$ is time-optimal:

\begin{proposition}
  For any $t \in o(\sqrt{n})$, $\IDMAT \not\in \ACA(t)$.
\end{proposition}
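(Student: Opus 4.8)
The plan is to argue by contradiction using the locality principle of Lemma~\ref{lem_locality_ACA}. Suppose some ACA $C$ accepted $\IDMAT$ within time $t \in o(\sqrt{n})$. For a large parameter $k$, consider the (unique) word $w = w_k(0) \# w_k(1) \# \cdots \# w_k(k-1) \in \IDMAT$ of length $n = k^2 + k - 1$. Since $t(n) \in o(\sqrt{n}) = o(k)$, the acceptance time $\tau = \tau(w) \le t(n)$ satisfies $2\tau + 1 < k$ for all sufficiently large $k$; in particular, a window of length $2\tau + 1$ fits strictly inside a single block. I would then manufacture a word $w' \notin \IDMAT$ that is locally indistinguishable from $w$, so that Lemma~\ref{lem_locality_ACA} forces $C$ to accept $w'$ as well, contradicting $L(C) = \IDMAT$.

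Concretely, I would obtain $w'$ from $w$ by displacing the single $1$ of one interior block, say block $m = \lfloor k/2 \rfloor$, from its diagonal position to a neighbouring interior position (replacing $w_k(m) = 0^m 1 0^{k-m-1}$ by $0^{m+1} 1 0^{k-m-2}$). Because $m \approx k/2$ while $\tau \in o(k)$, this lone $1$ stays at distance $> 2\tau$ from both delimiters of its block, in $w$ as well as in $w'$. I then verify the three hypotheses of the lemma. The modification sits in the middle of the word and leaves the first and last $k > 2\tau$ symbols untouched, so $p_{2\tau}(w) = p_{2\tau}(w')$ and $s_{2\tau}(w) = s_{2\tau}(w')$. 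Moreover, every length-$(2\tau+1)$ infix of $w'$ is either the all-zero block $0^{2\tau+1}$, a pattern $0^p 1 0^{2\tau - p}$ straddling the displaced $1$, or an infix touching a $\#$ away from the modification; each of these already occurs in $w$ (the window patterns around a lone interior $1$ do not depend on its exact position), whence $I_{2\tau+1}(w') \subseteq I_{2\tau+1}(w)$.

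With the hypotheses in place, Lemma~\ref{lem_locality_ACA} yields that $C$ accepts $w'$, i.e.\ $w' \in \IDMAT$. But for each fixed length $n = k^2 + k - 1$ there is exactly one word in $\IDMAT$, namely $w$ itself (the length is strictly increasing and injective in $k$); since $w'$ has the same length $n$ and $w' \neq w$, we get $w' \notin \IDMAT$, a contradiction. As this argument goes through for all large $k$, no such $C$ can exist, and therefore $\IDMAT \notin \ACA(t)$.

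I expect the only delicate point to be the bookkeeping that guarantees $2\tau + 1 < k$ and that the displaced $1$ stays clear of the block boundaries; this is exactly where the assumption $t \in o(\sqrt{n})$ (equivalently $o(k)$) is essential, since it is what makes the local windows too myopic to tell whether the matrix is genuinely the identity. Everything else reduces to the routine verification of the infix inclusion.
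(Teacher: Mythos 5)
Your proof is correct and takes essentially the same approach as the paper: there, too, one applies Lemma~\ref{lem_locality_ACA} to $w$ and the word $w'$ obtained by replacing a middle block $w_k(\frac{k}{2}-1)$ with $w_k(\frac{k}{2})$ --- which is exactly your displacement of the lone $1$ by one position --- and verifies the same prefix, suffix, and infix conditions before concluding that $C$ would accept $w' \notin \IDMAT$. If anything, your write-up is slightly more careful on one point: the paper states the infix inclusion with the sides inadvertently swapped, whereas you verify the direction $I_{2\tau+1}(w') \subseteq I_{2\tau+1}(w)$ that the lemma actually requires.
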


\myomit{
\begin{proof}
  Let $n \in \N_+$ be such that $t(n) < \frac{1}{8}\sqrt{n}$, and let $w \in
\IDMAT$ be given with $|w| = n$.
  In particular, $|w| = k^2 + k - 1$ for some $k \in \N_+$ and $w = w_k(0) \#
\cdots \# w_k(k-1)$; without restriction, we also assume $k$ is even.
  Let $w'$ be the word obtained from $w$ by replacing its ($\frac{k}{2} - 1$)-th
block with $w_k(\frac{k}{2})$.
  Then, $p_{2t(n)}(w) = p_{2t(n)}(w')$, $s_{2t(n)}(w) = s_{2t(n)}(w')$, and
$I_{2t(n) + 1}(w) \subseteq I_{2t(n) + 1}(w')$, where the latter follows
from $w_k(\frac{k}{2}) = 0^{k / 2} 1 0^{k / 2 - 1}$, $|w_k(i)| = k$, and $t(n) <
\frac{1}{8}(k + 1)$.
  Thus, by Lemma~\ref{lem_locality_ACA}, if an ACA with time complexity bounded
by $t$ accepts $w$, so does it accept $w' \not\in \IDMAT$.
\end{proof}
}

\section{Main Results} \label{sec_main_results}

In this section, we present various results regarding $\ACA(t)$ where $t \in
o(n)$.
First, we obtain a time hierarchy theorem, that is, under plausible conditions,
$\ACA(t') \subsetneq \ACA(t)$ for $t' \in o(t)$.
Next, we show $\ACA(t) \cap \REG$ is (strictly) contained in $\LT$ and also
present an improvement to Theorem~\ref{thm_ACA_log}.
Finally, we study inclusion relations between $\ACA(t)$ and the $\SC$ and
(uniform) $\AC$ hierarchies.
Save for the material covered so far, all three subsections stand out
independently from one another.

\subsection{Time Hierarchy} \label{sec_ACA_time_hier}

For functions $f,t\colon \N_+ \to \N_0$, we say $f$ is \emph{time-constructible
by CAs in $t(n)$ time} if there is a CA $C$ which, on input $1^n$, reaches a
configuration containing the value $f(n)$ (binary-encoded) in at most $t(n)$
steps.%
\footnote{Just as is the case for Turing machines, there is not a single
definition for time-constructibility by CAs (see, e.g., \cite{CA_counters} for
an alternative).
  Here, we opt for a plausible variant which has the benefit of simplifying the
ensuing line of argument.}
Note that, since CAs can simulate (one-tape) Turing machines in real-time,
any function constructible by Turing machines (in the corresponding sense)
is also constructible by CAs.

\begin{theorem} \label{thm_ACA_time_hierarchy}
  Let $f \in \omega(n)$ with $f(n) \le 2^n$, $g(n) = 2^{n - \lfloor \log f(n)
\rfloor}$, and let $f$ and $g$ be time-constructible (by CAs) in $f(n)$ time.
  Furthermore, let $t\colon \N_+ \to \N_0$ be such that $3 f(k) \le t(f(k)g(k))
\le c f(k)$ for some constant $c \ge 3$ and all but finitely many $k \in \N_+$.
  Then, for every $t' \in o(t)$, $\ACA(t') \subsetneq \ACA(t)$.
\end{theorem}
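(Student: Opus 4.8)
The plan is to separate the two classes by a single witness language $L$ with $L \in \ACA(t)$ but $L \notin \ACA(t')$; combined with the trivial inclusion $\ACA(t') \subseteq \ACA(t)$ (which holds since $t'(n) \le t(n)$ for all large $n$, the finitely many short lengths being harmless), this yields the strict inclusion. I would generalize the $\BIN$/$\IDMAT$ constructions so that the block width is tied to $f$ rather than to $\log n$. Over the alphabet $\binalph \cup \{\#\}$, for each $k$ set the width $b = f(k)$, the count $g = g(k)$, $d = \lceil \log_2 g(k)\rceil$, and $p = \lceil f(k)/4\rceil$, and let the unique length-$(f(k)g(k))$ word of $L$ be
\[
  w_k = B_0 \,\#\, B_1 \,\#\, \cdots \,\#\, B_{g-1}, \qquad B_i = 0^{p}\, 1^{k}\, 0\, \bin_{d}(i)\, 0^{p}.
\]
Thus every block has width exactly $f(k)$, carries a central \emph{header} recording $k$ in unary and its index $i$ in binary, and is flanked by $0$-padding of width $p > f(k)/4$ on both sides. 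This is well defined because $f \in \omega(n)$ forces $k + 1 + d = o(f(k))$, so the header fits; and distinct $k$ give non-overlapping lengths $f(k)g(k) \in [2^k, 2^{k+1})$, so $L$ contains one word per valid length. Put $L = \{ w_k \mid k \in \N_+ \}$.

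For the upper bound $L \in \ACA(t)$, I would use the textbook block-propagation technique (each block's contents reach its neighbours in exactly $2b$ steps). During and after propagation, each block checks locally that it is well formed, that the $\#$-delimiters are single, that neighbouring blocks store the same $k$ and carry consecutive indices, that the first index is $0$, and that the last block's index is $g(k)-1$; storing $k$ explicitly in the header is what keeps all of this \emph{local}, since a block can then invoke the time-constructors for $f$ and $g$ on its own copy of $1^k$ (each halting within $f(k) = b$ steps by hypothesis) to certify $b = f(k)$ and the index range. Propagation costs $2b$ and the constructor phase costs $\le b$, so $w_k$ is accepted in at most $3 f(k) \le t(f(k)g(k)) = t(|w_k|)$ steps; as the time complexity is only assessed on accepted words, this gives $L \in \ACA(t)$. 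The hypothesis $3 f(k) \le t(\cdot)$ is exactly calibrated for this $2b + b$ budget.

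For the lower bound $L \notin \ACA(t')$, suppose an ACA $C$ with time bound $t'$ accepts $L$. Fix a sufficiently large $k$ with $g(k) \ge 4$ — such $k$ exist (indeed are infinite) in the regime of interest, since sublinearity of $t$ forces $f(k) = o(2^k)$ and hence $g(k) \to \infty$. Write $N = |w_k|$ and $\ell = 2t'(N)+1$; from $t' \in o(t)$ and $t(N) \le c f(k)$ we get $t'(N) = o(f(k))$, so $\ell < p$ for large $k$. Form $w_k'$ from $w_k$ by overwriting the interior block $B_j$ ($j = \lfloor g/2\rfloor$, so $2 \le j \le g-2$) with a copy of $B_{j+1}$. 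Then $w_k' \notin L$, because its $j$-th block no longer encodes $j$, while $p_{2t'(N)}(w_k') = p_{2t'(N)}(w_k)$ and $s_{2t'(N)}(w_k') = s_{2t'(N)}(w_k)$ since only an interior block was touched. Letting $C$ accept $w_k$ in $\tau \le t'(N)$ steps and invoking Lemma~\ref{lem_locality_ACA} would then force $C$ to accept $w_k' \notin L$, a contradiction — provided the infix inclusion $I_{2\tau+1}(w_k') \subseteq I_{2\tau+1}(w_k)$ holds, which follows from $I_{\ell}(w_k') \subseteq I_{\ell}(w_k)$.

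The main obstacle is precisely this infix inclusion: I must ensure the rewrite introduces no new length-$\ell$ factor. This is what the uniform $0$-padding of width $p > \ell$ buys. Every inter-block window is of the shape $0^x\#0^y$ irrespective of the adjacent indices, and any window meeting a block boundary reaches only into padding; hence all windows straddling or abutting $B_j$ in $w_k'$ already occur in $w_k$. Windows lying wholly inside the rewritten block are factors of $B_{j+1}$, which occurs verbatim as the genuine $(j+1)$-th block of $w_k$; crucially, the header prefix $0^p 1^k$ is identical in every block (the same $k$), so no window near the $1^k$ region is new either. (By contrast, the $\IDMAT$ argument could use a sparse unary encoding; here the index must be stored in binary because $g(k)$ may dwarf $f(k)$, and it is the padding, not sparsity, that salvages locality.) Since agreement of the length-$\ell$ prefixes, suffixes, and infix inclusions entails the corresponding length-$2\tau{+}1$ statements for every $\tau \le t'(N)$, the hypotheses of Lemma~\ref{lem_locality_ACA} are met and the contradiction is complete, establishing $L \notin \ACA(t')$ and hence $\ACA(t') \subsetneq \ACA(t)$.
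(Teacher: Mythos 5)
Your proposal follows essentially the same route as the paper's proof: a single witness language consisting of padded binary-counter words of length exactly $f(k)g(k)$, an upper bound of $3f(k) \le t(f(k)g(k))$ steps obtained by textbook block propagation combined with running the CA time-constructors for $f$ and $g$ (each within $f(k)$ steps), and a lower bound via Lemma~\ref{lem_locality_ACA}. The paper realizes this more economically: its blocks are just $\bin_k(i)$ of length $k$, separated by $\#^{f(k)-k}$, so the parameter $k$ is read off as the block length itself (no unary header is needed), and its non-member witness is simply $w' = 0^k \#^{|w_k|-k}$, whose length-$(2\tau{+}1)$ factors are all of the form $0^a$, $0^a\#^b$, or $\#^c$ and hence occur at the start of $w_k$ --- avoiding your more delicate case analysis for the block-duplication perturbation. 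Your version is correct in spirit but trades this simplicity for interior $0$-padding; both perturbations feed the same locality lemma, so the difference is cosmetic.

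Two defects should be repaired, though neither is fatal. First, your parameters are inconsistent: with $p = \lceil f(k)/4 \rceil$, the block $0^p 1^k 0 \bin_d(i) 0^p$ has width roughly $f(k)/2 + k + d$, not \emph{exactly} $f(k)$, and $g$ such blocks plus $g-1$ separators cannot total $f(k)g(k)$ in any case. This matters beyond bookkeeping, because the hypothesis constrains $t$ \emph{only} at the lengths $f(k)g(k)$; the inequality $3f(k) \le t(|w_k|)$ is available only if $|w_k| = f(k)g(k)$ on the nose. The fix is routine: choose the two padding widths (they need not be equal) so that block-plus-separator has width exactly $f(k)$, which is possible for large $k$ since $k + d \in o(f(k))$, while keeping both paddings above $f(k)/4$. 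Second, your claim that infinitely many $k$ satisfy $g(k) \ge 4$ appeals to \enquote{sublinearity of $t$}, which is not among the hypotheses: $t$ is constrained only on the subsequence $f(k)g(k)$, where it may well be linear (e.g.\ $f(n) = 2^{n-1}$ gives $g \equiv 2$ and $t(N) = \Theta(N)$ there, yet the theorem still applies). When $g(k) \le 3$ your interior-block rewrite is unavailable, but it is easily replaced: overwrite the \emph{leftmost} block $B_0$ by a copy of $B_1$; since both blocks begin with $0^p$ and $\ell < p$, the length-$\ell$ prefix is unchanged and the rest of your infix analysis goes through. (Only the degenerate case $g(k) = 1$, i.e.\ $f(k) = 2^k$, resists --- but there the paper's own witness degenerates to $w' = w_k$ as well.) With these two repairs your argument is sound and matches the paper's.
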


Given $a > 1$, this can be used, for instance, with any time-constructible $f\in
\Theta(n^a)$ (resp., $f\in \Theta(2^{n/a})$, in which case $a = 1$ is also
possible) and $t \in \Theta((\log n)^a)$ (resp., $t \in \Theta(n^{1/a})$).
The proof idea is to construct a language $L$ similar to $\BIN$ (see
Section~\ref{sec_first_obs}) in which every $w \in L$ has length exponential in
the size of its blocks while the distance between any two blocks is
$\Theta(t(|w|))$.
Due to Lemma~\ref{lem_locality_ACA}, the latter implies $L$ is not recognizable
in $o(t(|w|))$ time.

\begin{proof}
  For simplicity, let $f(n) > n$.
  Consider $L = \{ w_k \mid k \in \N_+ \}$ where
  \[
    w_k = \bin_k(0) \#^{f(k) - k}
      \bin_k(1) \#^{f(k) - k}
      \cdots \bin_k(g(k) - 1) \#^{f(k) - k}
  \]
  and note $|w_k| = f(k) g(k)$.
  Because $t(|w_k|) \in O(f(k))$ and $f(k) \in \omega(k)$, given any $t' \in
o(t)$, setting $w = w_k$, $w' = 0^k \#^{|w_k| - k}$, and $\tau = t'(|w_k|)$ and
applying Lemma~\ref{lem_locality_ACA} for sufficiently large $k$ yields $L
\not\in \ACA(t')$.

  By assumption it suffices to show $w = w_k \in L$ is accepted by an ACA $C$
in at most $3 f(k) \le t(|w|)$ steps for sufficiently large $k \in \N_+$.
  The cells of $C$ perform two procedures $P_1$ and $P_2$ simultaneously:
  $P_1$ is as in the ACA for $\BIN$ (see Section~\ref{sec_first_obs}) and
ensures that the blocks of $w$ have the same length, that the respective binary
encodings are valid, and that the last value is correct (i.e., equal to $g(k) -
1$).
  In $P_2$, each block computes $f(k)$ as a function of its block length $k$.
  Subsequently, the value $f(k)$ is decreased using a real-time counter (see,
e.g., \cite{CA_counters} for a construction).
  Every time the counter is decremented, a signal starts from the block's
leftmost cell and is propagated to the right.
  This allows every group of cells of the form $bs$ with $b \in \{ 0,1 \}^+$
and $s \in \{ \# \}^+$ to assert there are precisely $f(k)$ symbols in total
(i.e., $|bs| = f(k)$).
  A cell is accepting if and only if it is accepting both in $P_1$ and $P_2$.
  The proof is complete by noticing either procedure takes a maximum of $3f(k)$
steps (again, for sufficiently large $k$).
\end{proof}

\subsection{Intersection with the Regular Languages}

In light of Proposition~\ref{prop_ACA_REG}, we now consider the intersection
$\ACA(t) \cap \REG$ for $t \in o(n)$ (in the same spirit as a conjecture by
Straubing \cite{straubing_book}).
For this section, we assume the reader is familiar with the theory of syntactic
semigroups (see, e.g., \cite{Eilenberg_Vol_B} for an in-depth treatment).

Given a language $L$, let $\SS(L)$ denote the syntactic semigroup of $L$.
It is well-known that $\SS(L)$ is finite if and only if $L$ is regular.
A semigroup $S$ is a \emph{semilattice} if $x^2 = x$ and $xy = yx$ for
every $x,y \in S$.
Additionally, $S$ is \emph{locally semilattice} if $eSe$ is a semilattice for
every \emph{idempotent} $e \in S$, that is, $e^2 = e$.
We use the following characterization of locally testable languages:

\begin{theorem}[\cite{brzozowski_LT, mcnaughton_LT}] \label{thm_mcnaughton}
  $L \in \LT$ if and only if $\SS(L)$ is finite and locally semilattice.
\end{theorem}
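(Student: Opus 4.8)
The plan is to work throughout with the syntactic morphism $\varphi\colon \Sigma^+ \to S = \SS(L)$ and to separate the bookkeeping from the real content. Since $\SS(L)$ is finite if and only if $L$ is regular, I would treat finiteness as a by-product of the forward direction and concentrate on the genuine equivalence: that $L$ is locally testable exactly when $eSe$ is a semilattice for every idempotent $e \in S$. Accordingly I split the proof into \emph{necessity} (LT implies finite and locally semilattice) and \emph{sufficiency} (finite and locally semilattice implies LT), expecting the latter to be the hard direction.

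For necessity, fix a witness $k$ for local testability. Finiteness is immediate once one checks that the equivalence $w_1 \equiv_k w_2$ (meaning $p_k(w_1)=p_k(w_2)$, $I_k(w_1)=I_k(w_2)$, $s_k(w_1)=s_k(w_2)$) refines the syntactic congruence: the only subtlety is that any word of length $< k$ is $\equiv_k$-equivalent only to itself, so for distinct $w_1,w_2$ both have length $\ge k$, and then $p_k$, $s_k$, and $I_k$ are preserved under multiplication by arbitrary left/right contexts. Hence the syntactic congruence has finite index. For the local-semilattice identities, let $e=\varphi(u)$ be idempotent and choose $m$ with $|u^m|\ge k$. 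I would compare $u^m x u^m y u^m$ with $u^m y u^m x u^m$, and $u^m x u^m$ with $u^m x u^m x u^m$. In each pair the two words share the same length-$k$ prefix and suffix (both begin and end with $u^m$) and the same set of length-$k$ factors, since a window of length $k$ can meet at most one inserted block and each inserted block is flanked by $u^m$ on both sides in either word, so the factor set is insensitive to order and multiplicity. Thus the members of each pair are $\equiv_k$-equivalent in every context and are identified by $\varphi$, which reads exactly as $e a e b e = e b e a e$ and $(eae)^2 = eae$ with $a=\varphi(x)$, $b=\varphi(y)$; that is, $eSe$ is a semilattice.

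The hard part is sufficiency. Here I would aim for the stronger statement that some $k$, chosen as a function of $|S|$, forces words with identical length-$k$ local data to have the same syntactic image, from which membership follows a fortiori. The core claim is that for long words the image $\varphi(w)$ is determined by $p_k(w)$, $s_k(w)$, and $I_k(w)$ alone. To establish it I would induct on $|w|$ and exploit idempotents: by the pigeonhole principle on the images of the prefixes of $w$, any sufficiently long word exposes a factor whose image is idempotent, and the identities of $eSe$ then let me rewrite $w$ without changing $\varphi(w)$ — idempotency deletes a repeated idempotent-bounded block, commutativity reorders blocks sandwiched between two occurrences of the same idempotent — driving $w$ into a normal form that records only which length-$k$ factors occur together with the bounded boundary data. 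Two words with identical local data reach the same normal form and hence share the same image.

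The main obstacle, where I expect most of the work to lie, is organising these image-preserving rewrites so that they are simultaneously \emph{exhaustive} (every word reduces to a canonical form) and \emph{compatible} (the reduction depends only on the length-$k$ local data, not on incidental global structure); this is the combinatorial heart of the Brzozowski--Simon/McNaughton argument. A clean way to control the recursion depth, and thereby the dependence of $k$ on $|S|$, is to invoke Simon's factorization forest theorem to factor $w$ along idempotents with height $O(|S|)$ and then apply the local-semilattice identities level by level; I would take $k$ linear in $|S|$, large enough that each idempotent factor used in the forest is long enough for its boundary behaviour to be read off from length-$k$ windows.
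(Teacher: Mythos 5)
Note first that the paper offers no proof of this statement to compare against: it is imported as a known result of Brzozowski--Simon and McNaughton and used purely as a black box in the proof of Theorem~\ref{thm_ACA_cap_reg}, so your proposal can only be measured against the literature. On that measure, your \emph{necessity} direction is correct and essentially complete: the relation $\equiv_k$ (equal $p_k$, $I_k$, $s_k$) is a finite-index congruence once the length-$<k$ caveat is handled (the local data of $uwv$ is computable from $u$, $v$, and the local data of $w$), it saturates $L$ by Definition~\ref{def_LT}, hence refines the syntactic congruence and yields finiteness of $\SS(L)$; and the comparisons of $u^m x u^m y u^m$ with $u^m y u^m x u^m$, and of $u^m x u^m$ with $u^m x u^m x u^m$, for $|u^m| \ge k$, do give $(exe)(eye) = (eye)(exe)$ and $(exe)^2 = exe$, since a length-$k$ window meets at most one inserted block and each block is flanked by $u^m$ on both sides. (This pumping-on-idempotents device is, incidentally, the same one the paper itself uses inside its proof of Theorem~\ref{thm_ACA_cap_reg}.)

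The genuine gap is in \emph{sufficiency}, and flagging where the work lies is not the same as doing it. Your core claim --- that the identities of $eSe$ allow any sufficiently long word to be rewritten, without changing its syntactic image, into a normal form determined solely by its $k$-local data --- is the entire content of the hard direction, and your sketch supplies no argument for either of the two properties you yourself identify as essential. Exhaustiveness fails to be obvious because deletion and commutation moves are licensed only between occurrences of factors with the \emph{same} idempotent image, and compatibility fails to be obvious because two words that merely share $p_k$, $I_k$, $s_k$ need not exhibit such occurrences in aligned positions; bridging exactly this is what Brzozowski--Simon accomplish with graph congruences and McNaughton with a direct combinatorial analysis. Your fallback via Simon's factorization forests is a legitimate known route, but it still requires formulating and proving an inductive invariant over forest levels (what local data determines the image of an idempotent node's product), which you do not state. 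Finally, the claimed bound ``$k$ linear in $|S|$'' is unsubstantiated: with forest height $\Theta(|S|)$ and window sizes compounding at each level, the natural bound is exponential in $|S|$; this is a secondary flaw, since the theorem needs only \emph{some} finite $k$, but as written the quantitative claim should be retracted or proved.
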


In conjunction with Lemma~\ref{lem_locality_ACA}, this yields the following,
where the strict inclusion is due to $\SOMEONE \not\in \ACA(t)$ (since $\SOMEONE
\in \LT$; see Section~\ref{sec_first_obs}):

\begin{theorem} \label{thm_ACA_cap_reg}
  For every $t \in o(n)$, $\ACA(t) \cap \REG \subsetneq \LT$.
\end{theorem}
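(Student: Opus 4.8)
The inclusion $\ACA(t)\cap\REG\subseteq\LT$ is the substance here; the strictness is immediate, since $\SOMEONE$ is regular and lies in $\LT$ (Section~\ref{sec_first_obs}) yet $\SOMEONE\notin\ACA(t)$ by Lemma~\ref{lem_locality_ACA}. So I would fix $L\in\ACA(t)\cap\REG$ and write $S=\SS(L)$, which is finite since $L$ is regular, with syntactic morphism $\phi\colon\Sigma^+\to S$. By Theorem~\ref{thm_mcnaughton} it suffices to prove $S$ is locally semilattice, i.e.\ that for every idempotent $e\in S$ and all $a,b\in eSe$ one has $a^2=a$ and $ab=ba$. I would realize these elements by words: pick $r,r'\in\Sigma^+$ with $a=e\phi(r)e$ and $b=e\phi(r')e$, and pick $f_0\in\Sigma^+$ with $\phi(f_0)=e$; setting $f=f_0^{\,j}$ gives $\phi(f)=e^j=e$ for every $j\ge 1$, so that $a=\phi(frf)$ and $b=\phi(fr'f)$, with the crucial feature that $|f|=j|f_0|$ can be made arbitrarily large without changing $a$ or $b$. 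Using $\phi(ff)=\phi(f)$, the two identities reduce to the word equalities $\phi(frfrf)=\phi(frf)$ and $\phi(frfr'f)=\phi(fr'frf)$.

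To verify a syntactic equality $\phi(\alpha)=\phi(\beta)$ I would use that membership $u\alpha v\in L$ depends only on the triple $(\phi(u),\phi(\alpha),\phi(v))$, with the identity adjoined when $u=\eps$ or $v=\eps$; since $S$ is finite, it therefore suffices to check $u\alpha v\in L\iff u\beta v\in L$ for $u,v$ ranging over a fixed finite set of context representatives of bounded length $\le D$. For each such $u,v$ I would compare two concrete words via Lemma~\ref{lem_locality_ACA_iff}: for the idempotency identity, $w_1=u\,frfrf\,v$ against $w_2=u\,frf\,v$; for commutativity, $w_1=u\,frfr'f\,v$ against $w_2=u\,fr'frf\,v$. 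In each case, with $\mu=\max\{t(|w_1|),t(|w_2|)\}$, the long buffer $f$ makes the length-$2\mu$ prefixes and suffixes agree (they lie entirely within the first, resp.\ last, copy of $f$) and, most importantly, makes the infix sets $I_{2\mu+1}(w_1)$ and $I_{2\mu+1}(w_2)$ \emph{equal}: as $|f|>2\mu+1$, no window of length $2\mu+1$ can span two of the short blocks $r,r'$, so every window is either a fixed $u$- or $v$-junction, a pure-$f$ window, or a window of the form \enquote{$f$-tail, one block, $f$-head}, and the collection of block-types occurring is the same in $w_1$ and $w_2$. Lemma~\ref{lem_locality_ACA_iff} then yields $w_1\in L\iff w_2\in L$, and ranging over all context representatives gives the two syntactic identities; hence $S$ is locally semilattice and $L\in\LT$.

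The crux is reconciling the \emph{growing} locality window $2\mu+1$, where $\mu=t(n)\in o(n)$, against the need for a constant degree of local testability, and this is exactly where both hypotheses enter. Bounding the contexts via the finiteness of $S$ keeps the total length $|w_i|=\Theta(j)$ controlled by the buffer power $j$, while $t\in o(n)$ forces $\mu=o(j)$; since $|f|=\Theta(j)$ grows linearly, taking $j$ large makes $|f|\gg 2\mu+1+D+\max\{|r|,|r'|\}$, which is precisely what the prefix, suffix, and infix matching above requires. The only points demanding care are the routine verifications that, for large $j$, (i) the differing portions of $w_1$ and $w_2$ lie outside the length-$2\mu$ prefix and suffix and (ii) the infix sets coincide \emph{exactly} (not merely up to inclusion), so that the \enquote{iff} form of Lemma~\ref{lem_locality_ACA_iff} applies in both directions.
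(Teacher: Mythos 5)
Your proposal is correct and follows essentially the same route as the paper's proof: both reduce via Theorem~\ref{thm_mcnaughton} to showing $\SS(L)$ is locally semilattice, pump the idempotent $e$ into an arbitrarily long buffer (the paper uses $e^{2r+1}$ where you use $f_0^{\,j}$) so that the length-$2\mu$ prefixes and suffixes and the exact infix sets of the two compared words coincide, and then conclude with Lemma~\ref{lem_locality_ACA_iff}. The only cosmetic difference is that you quantify over a finite set of bounded-length context representatives with a uniform buffer exponent, whereas the paper handles arbitrary contexts $a,b$ by choosing $r$ depending on them; this does not change the substance of the argument.
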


\begin{proof}
  Let $L \in \ACA(t)$ be a language over the alphabet $\Sigma$ and, in addition,
let $L \in \REG$, that is, $S = \SS(L)$ is finite.
  By Theorem~\ref{thm_mcnaughton}, it suffices to show $S$ is locally
semilattice.
  To that end, let $e \in S$ be idempotent, and let $x, y \in S$.

  To show $(exe)(eye) = (eye)(exe)$, let $a, b \in \Sigma^\ast$ and consider
the words $u = a (exe) (eye) b$ and $v = a (eye) (exe) b$.
  For $m \in \N_+$, let $u_m' = a (e^m x e^m) (e^m y e^m) b$, and let $r \in
\N_+$ be such that $r > \max\{ |x|, |y|, |a|, |b| \}$ and also $t(|u_{2r+1}'|) <
\frac{1}{16|e|} |u_{2r+1}'| < r$.
  Since $e$ is idempotent, $u' = u_{2r+1}'$ and $u$ belong to the same class in
$S$, that is, $u' \in L$ if and only if $u \in L$; the same is true for $v' = a
(e^{2r + 1} y e^{2r + 1}) (e^{2r + 1} x e^{2r + 1}) b$ and $v$.
  Furthermore, $p_{2r}(u') = p_{2r}(v')$, $I_{2r + 1}(u') = I_{2r + 1}(v')$, and
$s_{2r}(u') = s_{2r}(v')$ hold.
  Since $L \in \ACA(t)$, Lemma~\ref{lem_locality_ACA_iff} applies.

  The proof of $(exe)(exe) = exe$ is analogous.
  Simply consider the words $a (e^m x e^m) b$ and $a (e^m x e^m) (e^m x e^m) b$
for sufficiently large $m \in \N_+$ and use, again,
Lemma~\ref{lem_locality_ACA_iff} and the fact that $e$ is idempotent.
\end{proof}

Using Theorems~\ref{thm_ACA_log}~and~\ref{thm_ACA_cap_reg}, we have $\ACA(t)
\subsetneq \LT$ for $t \in o(\log n)$.
We can improve this bound to $\ACA(O(1)) = \SLT_\lor$, which is a proper subset
of $\LT$:

\begin{theorem} \label{thm_ACA_olog}
  For every $t \in o(\log n)$, $\ACA(t) = \ACA(O(1))$.
\end{theorem}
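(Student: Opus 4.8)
The plan is to prove the two inclusions separately, with all the work concentrated on $\ACA(t) \subseteq \ACA(O(1))$. The inclusion $\ACA(O(1)) \subseteq \ACA(t)$ is the routine direction: any constant-time ACA is in particular an ACA whose acceptance time bound lies in $o(\log n)$, so it falls under $\ACA(t)$ (the finitely many short inputs on which a fixed $t$ could momentarily dip below the constant are absorbed into the state set). For the converse, the key observation is that it suffices to show that the \emph{acceptance time on accepted words is bounded by a constant}: if $C$ is an ACA for $L \in \ACA(t)$ and there is a constant $T_0$ such that every $w \in L(C)$ is accepted within $T_0$ steps, then $L \in \ACA(T_0) \subseteq \ACA(O(1))$ by monotonicity, and $\ACA(O(1)) = \SLT_\lor$ by Theorem~\ref{thm_ACA_SLT}. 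So the entire theorem reduces to establishing this time bound.

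To bound the acceptance time, I would argue by contradiction: assume there are accepted words with arbitrarily large minimal acceptance time, and fix one such $w \in L(C)$ with minimal acceptance time $\tau = \tau(w)$ large. Writing $n = |w|$, the hypothesis $\tau \le t(n)$ together with $t \in o(\log n)$ forces $n$ to be enormous relative to $\tau$; in particular $n \gg |\Sigma|^{2\tau} = 2^{O(\tau)}$ for $\tau$ large. By minimality of $\tau$, the configuration at time $\tau - 1$ is not $A$-final, so there is an active, non-accepting \emph{witness cell} $j$ at time $\tau - 1$; its state there depends only on its radius-$(\tau-1)$ neighborhood. Now I would apply a pumping argument: since $n \gg |\Sigma|^{2\tau}$, pigeonhole yields two occurrences of the same length-$2\tau$ factor, located away from both ends of $w$ and away from the window $[j-(\tau-1),\,j+(\tau-1)]$. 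Deleting the segment between them (using the matching factor as a seam) produces a shorter word $w''$ with $p_{2\tau}(w'') = p_{2\tau}(w)$, $s_{2\tau}(w'') = s_{2\tau}(w)$, and $I_{2\tau+1}(w'') \subseteq I_{2\tau+1}(w)$. By Lemma~\ref{lem_locality_ACA}, $C$ accepts $w''$ in at most $\tau$ steps; and because the deletion leaves the radius-$(\tau-1)$ neighborhood of $j$ (and its distance to the boundaries) intact, the witness cell is still active and non-accepting at time $\tau - 1$, so $w''$ is \emph{not} accepted in fewer than $\tau$ steps. Hence $w''$ has minimal acceptance time exactly $\tau$.

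Iterating this deletion on both sides of $j$ until no repeated length-$2\tau$ factor survives in the deletable regions yields a word $w^\ast \in L(C)$ of length $|w^\ast| \le 2|\Sigma|^{2\tau} + O(\tau) = 2^{O(\tau)}$ whose minimal acceptance time is still $\tau$ (the prefix/suffix of length $2\tau$, the infix-subset relation to $w$, and the witness cell are all preserved throughout, so Lemma~\ref{lem_locality_ACA} may be applied once to $w$ and $w^\ast$). Now $\tau = \tau(w^\ast) \le t(|w^\ast|)$, while $t \in o(\log n)$ gives $t(|w^\ast|) \le \epsilon \log_2 |w^\ast| \le \epsilon \cdot O(\tau)$ once $|w^\ast|$ is large; choosing $\epsilon$ small enough forces $\tau < \tau$, a contradiction (the case where $|w^\ast|$ stays bounded by a constant is ruled out separately, since then $\tau$ would be bounded by the finite maximum of $t$ over short inputs). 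This establishes the constant bound $T_0$ and completes the proof.

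The main obstacle is the deletion step, which must simultaneously serve two opposing purposes: it must only \emph{remove} length-$(2\tau+1)$ infixes and preserve the $2\tau$-prefix and $2\tau$-suffix (to keep acceptance in $\le \tau$ steps via the locality lemma), while at the same time \emph{preserving} a non-accepting cell at time $\tau-1$ (to keep the acceptance time from dropping). Getting the bookkeeping right—choosing the repeated factor far enough from both the word boundaries and the witness cell, and verifying that iterating on both sides shrinks the total length to $2^{O(\tau)}$ without ever disturbing the witness—is the delicate part; the rest is a clean pigeonhole enabled by the gap $n \gg |\Sigma|^{2\tau}$ that $t \in o(\log n)$ guarantees.
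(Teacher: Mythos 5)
Your overall architecture—reduce the theorem to a constant bound on the acceptance time, then compress a slowly-accepted word down to length $2^{O(\tau)}$ and contradict $t \in o(\log n)$—is genuinely different from the paper's proof, which runs no contradiction argument at all: it compresses $w$ via a path in the De Bruijn graph over length-$(2\tau+1)$ infixes, chosen to visit \emph{every} node the original word visits, so that the infix sets of the short and long words are \emph{equal}; Lemma~\ref{lem_locality_ACA} can then be applied in the reverse direction to transfer the short word's time bound back to the long word, and an induction on length lands below a fixed threshold $n_0$, giving the constant bound directly with no witness bookkeeping. Your pumping deletion only yields the one-sided inclusion $I_{2\tau+1}(w'') \subseteq I_{2\tau+1}(w)$, which is why you are forced to preserve non-acceptance separately—and that is exactly where your proof has a genuine gap. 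Minimality of $\tau$ gives you a cell $j$ that is active and non-accepting at time \emph{exactly} $\tau-1$, and preserving $N_{\tau-1}(j)$ does guarantee that $w''$ is not $A$-final at step $\tau-1$. It does \emph{not} guarantee that $w''$ is not accepted at some earlier step $\tau'' < \tau-1$: the cells blocking acceptance at those earlier steps may lie entirely inside the deleted segment, and after deletion the configuration of $w''$ at such a step can become all-accepting. So "the witness cell is still non-accepting at time $\tau-1$, hence $w''$ is not accepted in fewer than $\tau$ steps" is a non sequitur; if the minimal acceptance time drops, your invariant is destroyed, the terminal inequality $\tau = \tau(w^\ast) \le t(|w^\ast|)$ evaporates, and the one-sided infix inclusion gives you no way to transfer a fast acceptance of $w^\ast$ back to $w$ either.

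The repair is to carry one witness per time step: for each $j \in \{0,\dots,\tau-1\}$, minimality supplies a cell $y_j$ that is non-accepting at step $j$; protect all windows $N_j(y_j)$—only $O(\tau^2)$ cells in total—from deletion. This is precisely the device the paper itself uses in the proof of Theorem~\ref{thm_DACA_sqrt}. The deletable region is then cut into at most $\tau+2$ gaps, and since $t \in o(\log n)$ forces $|w| \ge 2^{\tau/\eps}$ for arbitrarily small $\eps$, the pigeonhole still produces a repeated length-$2\tau$ factor with both occurrences inside a single gap; the terminal length becomes $(\tau+2) \cdot 2^{O(\tau)} + O(\tau^2) = 2^{O(\tau)}$, which still yields the contradiction. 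With that fix your argument is correct and, interestingly, unifies this theorem with the $\Omega(\sqrt{n})$ DACA threshold, at the price of heavier bookkeeping than the paper's infix-equality route.
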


\begin{proof}
  We prove every ACA $C$ with time complexity at most $t \in o(\log n)$ actually
  has $O(1)$ time complexity.
  Let $Q$ be the state set of $C$ and assume $|Q| \ge 2$, and let $n_0 \in \N_+$
  be such that $t(n) < \frac{\log n}{9 \log |Q|}$ for $n \ge n_0$.
  Letting $k(n) = 2t(n) + 1$ and assuming $t(n) \ge 1$, we then have
  $|Q|^{3k(n)} \le |Q|^{9t(n)} < n$ ($\star$).
  We shall use this to prove that, for any word $w \in L$ of length $|w| \ge
  n_0$, there is a word $w' \in L$ of length $|w'| \le n_0$ as well as $r < n_0$
  such that $p_r(w) = p_r(w')$, $I_{r+1}(w) = I_{r+1}(w')$, and $s_r(w) =
  s_r(w')$.
  By Lemma~\ref{lem_locality_ACA}, $C$ must have $t(|w'|)$ time complexity
  on $w$ and, since the set of all such $w'$ is finite, it follows that $C$ has
  $O(1)$ time complexity.

  Now let $w$ be as above and let $C$ accept $w$ in (exactly) $\tau = \tau(w)
  \le t(|w|)$ steps.
  We prove the claim by induction on $|w|$.
  The base case $|w| = n_0$ is trivial, so let $n > n_0$ and assume the claim
  holds for every word in $L$ of length strictly less than $n$.
  Consider the De Bruijn graph $G$ over the words in $|Q|^\kappa$ where $\kappa
  = 2\tau + 1$.
  Then, from the infixes of $w$ of length $\kappa$ (in order of appearance in
  $w$) one obtains a path $P$ in $G$ by starting at the leftmost infix and
  visiting every subsequent one, up to the rightmost one.
  Let $G'$ be the induced subgraph of $G$ containing exactly the nodes visited
  by $P$, and notice $P$ visits every node in $G'$ at least once.
  It is not hard to show that, for every such $P$ and $G'$, there is a path $P'$
  in $G'$ with the same starting and ending points as $P$ and that visits every
  node of $G'$ at least once while having length at most $m^2 \le
  |Q|^{2\kappa}$, where $m$ is the number of nodes in $G'$.\footnote{
    Number the nodes of $G'$ from $1$ to $m$ according to the order in which
    they are first visited by $P$.
    Then, there is a path in $G'$ from $i$ to $i+1$ for every $i \in
    \{1,\dots,m-1\}$, and a shortest such path has length at most $m$.
    Piecing these paths together along with a last (shortest) path from $m$ to
    the ending point of $P$, we obtain a path of length at most $m^2$ with the
    purported property.
  }
  To this $P'$ corresponds a word $w'$ of length $|w'| \le \kappa +
  |Q|^{2\kappa} < |Q|^{3\kappa}$ for which, by construction of $P'$ and $G'$,
  $p_{\kappa-1}(w')  = p_{\kappa-1}(w)$, $I_\kappa(w') = I_\kappa(w)$, and
  $s_{\kappa-1}(w') = s_{\kappa-1}(w')$.
  Since $\kappa \le k(|w|)$, using ($\star$) we have $|w'| < |w|$, and then
  either $|w'| \le n_0$ and $\kappa < n_0$ (since otherwise $w = w'$,
  which contradicts $|w'| < |w|$), or we may apply the induction hypothesis; in
  either case, the claim follows.
\end{proof}

\subsection{Relation to Parallel Complexity Classes}

In this section, we relate $\ACA(t)$ to other classes which characterize
parallel computation, namely the $\SC$ and (uniform) $\AC$ hierarchies.
In this context, $\SC^k$ is the class of problems decidable by Turing machines
in $O((\log n)^k)$ space and polynomial time, whereas $\AC^k$ is that decidable
by Boolean circuits with polynomial size, $O((\log n)^k)$ depth, and gates with
unbounded fan-in.
$\SC$ (resp., $\AC$) is the union of all $\SC^k$ (resp., $\AC^k$) for $k \in
\N_0$.
Here, we consider only uniform versions of $\AC$; when relevant, we state the
respective uniformity condition.
Although $\SC^1 = \L \subseteq \AC^1$ is known, it is unclear whether any other
containment holds between $\SC$ and $\AC$.

One should not expect to include $\SC$ or $\AC$ in $\ACA(t)$ for any $t \in
o(n)$.
Conceptually speaking, whereas the models of $\SC$ and $\AC$ are capable of
random access to their input, ACAs are inherently local (as evinced by
Lemmas~\ref{lem_locality_ACA}~and~\ref{lem_locality_ACA_iff}).
Explicit counterexamples may be found among the unary languages:
For any fixed $m \in \N_+$ and $w_1, w_2 \in \{ 1 \}^+$ with $|w_1|, |w_2| \ge
m$, trivially $p_{m-1}(w_1) = p_{m-1}(w_2)$, $I_m(w_1) = I_m(w_2)$, and
$s_{m-1}(w_1) = s_{m-1}(w_2)$ hold.
Hence, by Lemma~\ref{lem_locality_ACA}, if an ACA $C$ accepts $w \in \{ 1
\}^+$ in $t \in o(n)$ time and $|w|$ is large (e.g., $|w| > 4 t(|w|)$), then $C$
accepts any $w' \in \{ 1 \}^+$ with $|w'| \ge |w|$.
Thus, extending a result from \cite{sommerhalder_ACA}:

\begin{proposition} \label{prop_ACA_unary}
  If $t \in o(n)$ and $L \in \ACA(t)$ is a unary language (i.e., $L \subseteq
\Sigma^+$ and $|\Sigma| = 1$), then $L$ is either finite or co-finite.
\end{proposition}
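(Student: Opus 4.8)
The plan is to exploit the fact, already noted in the discussion preceding the statement, that all sufficiently long unary words are locally indistinguishable, and to feed this into Lemma~\ref{lem_locality_ACA}. Concretely, I would fix an ACA $C$ accepting $L$ with time complexity bounded by $t$. Since $t \in o(n)$, there is a threshold $n_0 \in \N_+$ with $t(n) < \frac{n}{4}$ (equivalently $n > 4t(n)$) for every $n \ge n_0$. I would then distinguish two cases according to whether $L$ is finite.

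If $L$ is finite, there is nothing to prove, so suppose $L$ is infinite. Then $L$ contains some word $w = 1^k$ with $k = |w| \ge n_0$; let $\tau = \tau(w) \le t(k)$ be the exact number of steps in which $C$ accepts $w$. Because $k > 4t(k) \ge 4\tau$, we have $2\tau + 1 \le k$, so that $p_{2\tau}(w) = 1^{2\tau} = s_{2\tau}(w)$ and $I_{2\tau+1}(w) = \{ 1^{2\tau+1} \}$. Now take any $w' = 1^{k'}$ with $k' \ge k$. Since $k' \ge k \ge 2\tau + 1$ as well, the same identities hold for $w'$, whence $p_{2\tau}(w) = p_{2\tau}(w')$, $s_{2\tau}(w) = s_{2\tau}(w')$, and $I_{2\tau+1}(w') = I_{2\tau+1}(w)$ (so in particular the required inclusion $I_{2\tau+1}(w') \subseteq I_{2\tau+1}(w)$ holds). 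Lemma~\ref{lem_locality_ACA} then yields that $C$ accepts $w'$, i.e., $w' \in L$.

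Thus every unary word of length at least $k$ lies in $L$, so the complement $\{ 1 \}^+ \setminus L$ is contained in the finite set $\{ 1^j \mid j < k \}$; that is, $L$ is co-finite. Together with the finite case, this establishes the claimed dichotomy. I do not anticipate any genuine difficulty here: the entire content lies in checking the hypotheses of Lemma~\ref{lem_locality_ACA} for unary inputs, and the only point requiring care is that the prefix, suffix, and infix conditions must all hold simultaneously. This is precisely where the bound $k > 4t(k)$ is used, since it guarantees $2\tau + 1 \le k$ and hence that the relevant prefix, suffix, and infix of $w$ are the full-length strings $1^{2\tau}$, $1^{2\tau}$, and $1^{2\tau+1}$, which every longer unary word shares.
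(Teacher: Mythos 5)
Your proof is correct and follows essentially the same route as the paper, which establishes the proposition in the discussion immediately preceding it: unary words of length at least $2\tau+1$ share all relevant prefixes, infixes, and suffixes, so Lemma~\ref{lem_locality_ACA} (with the same threshold $|w| > 4t(|w|)$) forces acceptance of every longer unary word once one sufficiently long word is accepted. Your write-up merely makes explicit the finite/infinite case split and the verification that $k > 4t(k)$ yields $2\tau+1 \le k$, both of which the paper leaves implicit.
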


In light of the above, the rest of this section is concerned with the converse
type of inclusion (i.e., of $\ACA(t)$ in the $\SC$ or $\AC$ hierarchies).
For $f,s,t \colon \N_+ \to \N_0$ with $f(n) \le s(n)$, we say $f$ is
\emph{constructible (by a Turing machine) in $s(n)$ space and $t(n)$ time} if
there is a Turing machine $T$ which, on input $1^n$, outputs $f(n)$ in binary
using at most $s(n)$ space and $t(n)$ time.
Also, recall a Turing machine can simulate $\tau$ steps of a CA with
$m$ (active) cells in $O(m)$ space and $O(\tau m)$ time.

\begin{proposition} \label{prop_TM_sim_ACA}
  Let $C$ be an ACA with time complexity bounded by $t \in o(n)$, $t(n) \ge
\log n$, and let $t$ be constructible in $t(n)$ space and $\poly(n)$ time.
  Then, there is a Turing machine which decides $L(C)$ in $O(t(n))$ space and
$\poly(n)$ time.
\end{proposition}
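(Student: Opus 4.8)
The plan is to have the Turing machine $T$ simulate $C$ on the input $w$ of length $n$, exploiting the locality of CAs so as to avoid ever storing a full configuration (which would cost $\Theta(n)$ space). First $T$ computes $t(n)$ in binary, which by assumption is possible in $O(t(n))$ space and $\poly(n)$ time. Since the time complexity of $C$ is bounded by $t$, we have $w \in L(C)$ if and only if there is some $\tau \le t(n)$ for which $\Delta^\tau(c_0)$ is $A$-final, i.e.\ every active cell is in an accept state; indeed, if $w$ is accepted then the minimal accepting time is $\le t(|w|) = t(n)$, and conversely any all-accepting configuration witnesses acceptance. Hence it suffices for $T$ to loop over all $\tau \in \{0, \dots, t(n)\}$ and, for each, test whether $\Delta^\tau(c_0)$ consists solely of states in $A \cup \{q\}$; $T$ accepts iff some $\tau$ passes, and rejects otherwise. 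As the step count is bounded, $T$ always halts and thus \emph{decides} $L(C)$.

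The crucial step is to perform the $A$-finality test in $O(t(n))$ space. Rather than materializing the whole configuration $\Delta^\tau(c_0)$, $T$ verifies each active cell independently: for each $z \in \{0, \dots, n-1\}$ it computes the single state $\Delta^\tau(c_0)(z)$ and checks whether it lies in $A$. By the domain-of-dependence property of CAs -- the same observation underlying Lemma~\ref{lem_locality_ACA} -- this state is determined by $c_0$ restricted to $N_\tau(z)$ alone. Accordingly, $T$ initializes a window holding the $2\tau + 1$ initial states $c_0(z - \tau), \dots, c_0(z + \tau)$ (reading $w(i)$ for $i \in \{0, \dots, n-1\}$ and $q$ otherwise) and applies $\delta$ to the window $\tau$ times; after each application the correctly-computed region shrinks by one cell on either side, so after $\tau$ steps the center cell holds exactly $\Delta^\tau(c_0)(z)$. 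The window occupies $O(\tau) \subseteq O(t(n))$ space, and the counters for $\tau$ and $z$ together with the indices used to address positions of $w$ fit in $O(\log n) \subseteq O(t(n))$ space, using $t(n) \ge \log n$.

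The resource bounds then follow directly. Correctness holds because $T$ inspects every candidate time $\tau \le t(n)$ and every active cell exactly, so it accepts precisely the words whose computation reaches an $A$-final configuration within the time bound, i.e.\ exactly $L(C)$. The total space is $O(t(n))$ as argued. For the running time, constructing $t(n)$ costs $\poly(n)$, and the main loop performs $O(t(n))$ iterations of $\tau$, each inspecting all $n$ active cells at a cost of $O(t(n)^2)$ local-transition evaluations per cell for the window simulation; since $t \in o(n)$ implies $t(n) \le n$, this amounts to $O(n \cdot t(n)^3) \subseteq O(n^4)$ work, hence $\poly(n)$ time overall.

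The hard part is achieving the $O(t(n))$ space bound at all: a direct step-by-step simulation of $C$ needs $\Theta(n)$ space merely to store one configuration. The way around this is the time-for-space trade-off above, in which each cell's value is recomputed from scratch inside its light cone of width $O(t(n))$. The points to nail down are therefore that this local recomputation faithfully reproduces $\Delta^\tau(c_0)(z)$ (the shrinking-window argument) and that the repeated recomputation across all $(\tau, z)$ pairs keeps the running time polynomial -- both of which are handled by the estimates above.
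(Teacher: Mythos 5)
Your proposal is correct and matches the paper's proof essentially step for step: both compute $t(n)$ first, then loop over candidate times $\tau \le t(n)$ and, for each active cell $z$, recompute $\Delta^\tau(c_0)(z)$ from scratch inside its light cone $N_\tau(z)$ via the shrinking-window application of $\delta$, accepting iff some stage is entirely accepting. Even your resource accounting ($O(t(n))$ space for the window plus counters, $O(n \cdot t(n)^3)$ plus the $\poly(n)$ cost of constructing $t(n)$ for time) coincides with the paper's analysis.
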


\myomit{
\begin{proof}
  We construct a machine $T$ with the purported property.
  Given an input $w$, $T$ first determines the input length $|w|$ and computes
the value $t(|w|)$ in time bounded by a polynomial $p\colon \N_+ \to \N_0$,
thus requiring $O(t(|w|) + \log n) = O(t(|w|))$ space and $O(|w| + p(|w|))$
time.
  The rest of the computation of $T$ takes place in stages, where stage $i$
corresponds to step $i$ of $C$.
  $T$ maintains a counter for $i$, incrementing it after each stage, and rejects
whenever $i > t(|w|)$.
  In stage $i$, $T$ iterates over every (active) cell $z \in \{ 0, \dots, |w| -
1 \}$ of $C$ and, by iteratively applying the local transition function of $C$
on $N_i(z)$, determines the state of $z$ in step $i$ of $C$; since $|N_i(z)| =
2i + 1$, this requires $O(i)$ space and $O(i^2)$ time for each $z$.
  If all states computed in the same stage are accept states, then $T$ accepts.
  Since $T$ runs for at most $t(|w|)$ stages, it uses $O(t(|w|))$ space and
$O(|w| \cdot t(|w|)^3 + p(|w|))$ time in total.
\end{proof}
}

Thus, for polylogarithmic $t$ (where the strict inclusion is due to
Proposition~\ref{prop_ACA_unary}):

\begin{corollary} \label{cor_ACA_SC}
  For $k \in \N_+$, $\ACA(O((\log n)^k)) \subsetneq \SC^k$.
\end{corollary}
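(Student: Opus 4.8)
The plan is to prove the inclusion and the strictness separately, the former via Proposition~\ref{prop_TM_sim_ACA} and the latter via Proposition~\ref{prop_ACA_unary}, which is exactly the division flagged in the remark preceding the corollary.

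For the inclusion $\ACA(O((\log n)^k)) \subseteq \SC^k$, I would take an arbitrary $L \in \ACA(O((\log n)^k))$, witnessed by an ACA $C$ whose time complexity is bounded by some $t \in O((\log n)^k)$, and apply Proposition~\ref{prop_TM_sim_ACA}. The one point needing care is that that proposition requires the time bound to be \emph{constructible} (in $t(n)$ space and $\poly(n)$ time), whereas the function $t$ delivered by membership in $\ACA(O((\log n)^k))$ carries no such guarantee. To bridge this, I would pass to the canonical upper bound $\hat t(n) = \lceil c\,(\log n)^k \rceil$, choosing the constant $c$ so that $t(n) \le c(\log n)^k$ for all $n \ge 2$ (finitely many small lengths only inflate $c$, and length-$1$ inputs are dealt with by a fixed table). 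Then $C$ also respects the bound $\hat t$; moreover $\hat t$ is polylogarithmic, hence lies in $o(n)$ with $\hat t(n) \ge \log n$ for $k \ge 1$, and its binary value (of only $O(\log\log n)$ bits) is plainly computable within $O((\log n)^k)$ space and polynomial time. Proposition~\ref{prop_TM_sim_ACA} then supplies a Turing machine deciding $L = L(C)$ in $O(\hat t(n)) = O((\log n)^k)$ space and polynomial time, so $L \in \SC^k$.

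For strictness I would exhibit a single language in $\SC^k \setminus \ACA(O((\log n)^k))$. Since every polylogarithmic function lies in $o(n)$, Proposition~\ref{prop_ACA_unary} forces any unary language in $\ACA(O((\log n)^k))$ to be finite or co-finite. I would therefore name a unary language that is neither but is trivially decidable in logarithmic space and polynomial time, for instance $\{ 1^n \mid n \text{ even} \}$, which lies in $\SC^1 \subseteq \SC^k$. As it is neither finite nor co-finite, Proposition~\ref{prop_ACA_unary} rules out its membership in $\ACA(O((\log n)^k))$, yielding the strict inclusion.

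The main obstacle is the constructibility gap in the inclusion step: the time bound coming from membership in $\ACA(O((\log n)^k))$ is an arbitrary function in $O((\log n)^k)$ and need not be computable within the space budget demanded by Proposition~\ref{prop_TM_sim_ACA}. Replacing it by the explicit $\hat t$ resolves this cleanly, and everything else is routine, since the strictness is immediate once a non-(co)finite unary witness in $\SC^1$ is named.
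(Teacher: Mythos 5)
Your proposal is correct and follows exactly the paper's route: the inclusion via Proposition~\ref{prop_TM_sim_ACA} and the strictness via Proposition~\ref{prop_ACA_unary} applied to a unary language that is neither finite nor co-finite. Your extra step of replacing the arbitrary bound $t \in O((\log n)^k)$ by the explicit constructible bound $\hat t(n) = \lceil c (\log n)^k \rceil$ is a detail the paper leaves implicit, and it is handled correctly.
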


Moving on to the $\AC$ classes, we employ some notions from descriptive
complexity theory (see, e.g., \cite{immerman_book} for an introduction).
Let $\FOL[t]$ be the class of languages describable by first-order formulas with
numeric relations in $\L$ (i.e., logarithmic space) and quantifier block
iterations bounded by $t\colon \N_+ \to \N_0$.
\myomit{%
For $C$, $Q$, $\delta$, and $\Delta$ as in Definition~\ref{def_CA}, we extend
$\Delta$ so $\Delta(c)$ is also defined for $c \in Q^{2\tau+1}$ with $\tau \in
\N_+$ by $\Delta(c)(i) = \delta(c(i), c(i+1), c(i+2))$, where $i \in \{ 0,
\dots, 2\tau - 1 \}$; in particular, $\Delta(c) \in Q^{2\tau-1}$.
Additionally, for $s \in Q$, let $\DELTA_C(c,s)$ be the relation which is true
if and only if $\Delta^\tau(c) = s$.
Note $\DELTA_C$ is computable by a Turing machine in $O(\tau)$ space and
$O(\tau^2)$ time.
}

\begin{theorem} \label{thm_ACA_FO}
  Let $t\colon \N_+ \to \N_0$ with $t(n) \ge \log n$ be constructible in
logarithmic space (and arbitrary time).
  For any ACA $C$ whose time complexity is bounded by $t$, $L(C) \in
\FOL[O(\frac{t}{\log n})]$.
\end{theorem}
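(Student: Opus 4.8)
The plan is to express $L(C)$ by iterating a fixed first-order operator in which \emph{one} iteration of the quantifier block advances $C$ by $\Theta(\log n)$ steps; since every accepted word is accepted within $t(n)$ steps, $\lceil t(n)/\log n\rceil = O(t/\log n)$ iterations will then suffice. Concretely, I would set $\ell = \lfloor \frac{\log n}{\log|Q|}\rfloor = \Theta(\log n)$ and group the cells into \emph{super-cells} of $\ell$ consecutive cells each. A super-cell holds a word in $Q^{\ell}$, and since $|Q|^{\ell}\le n$ this encodes as a single element of the domain $\{0,\dots,n-1\}$. Reading $w$ and packing it into the initial super-cell configuration is first-order definable using the $\L$ numeric relations available in $\FOL$, as is padding the boundary with the inactive state $q$.

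The engine of the simulation is the relation $\DELTA_C$ from the preceding remark. To advance a super-cell by $\ell$ steps I need, for each of its $\ell$ cells, the value of $\Delta^{\ell}$ on a window of $2\ell+1\le 3\ell$ cells centered there; all these windows lie inside the block formed by the super-cell together with its two immediate neighbors. Hence the new packed value of super-cell $z$ is a function of the three old packed values of $z-1,z,z+1$, and by the remark this function is computable in $O(\ell)=O(\log n)$ space, i.e., it is an $\L$ numeric relation on the corresponding domain elements. The update ``new value of $z$ equals $\DELTA_C$ applied to the old values of $z-1,z,z+1$'' is therefore expressible by a first-order formula that quantifies over only the constantly many neighboring super-cells and invokes this $\L$ relation, so a single application of the operator advances the entire configuration by $\Theta(\log n)$ steps.

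It remains to detect acceptance. Because $C$ has time complexity bounded by $t$, a word is accepted iff some $\Delta^{\tau}(c_0)$ with $\tau\le t(n)$ is $A$-final, while a rejected word never reaches an $A$-final configuration at all; it is thus harmless that the last phase may overshoot $t(n)$ by $O(\log n)$ steps. Since $A$-finality is global and may occur at a time that is not a multiple of $\ell$, I would fold the test into every phase: the same $\L$ relation that updates super-cell $z$ also outputs, for each of the $\ell$ intermediate times of the phase, a bit recording whether all $\ell$ cells of $z$ are accepting then. Bundling these into an $\ell$-bit vector $a(z)$ (again a single domain element, as $2^{\ell}\le n$), the configuration at the $\tau$-th intermediate time is $A$-final iff the $\tau$-th bit of $\bigwedge_{z}a(z)$ is set. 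This bitwise conjunction over all super-cells is a first-order aggregation (a single $\forall z$), so the operator maintains one running flag, disjoining ``$\bigwedge_{z}a(z)\neq 0$'' into it at each phase; then $w\in L(C)$ iff the flag holds after $\lceil t(n)/\ell\rceil=O(t/\log n)$ iterations.

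Finally I would invoke the characterization of iterated first-order quantifier blocks \cite{immerman_book}: iterating a fixed $\FOL$-definable operator $m(n)$ times lies in $\FOL[O(m)]$, since a fixed formula uses $O(1)$ quantifier blocks. With $m(n)=O(t/\log n)$ this gives $L(C)\in\FOL[O(t/\log n)]$, the log-space constructibility of $t$ ensuring that this iteration bound is itself available to the formula. I expect the main obstacle to be justifying the central claim that one first-order step simulates $\Theta(\log n)$ steps of $C$: this rests on two ingredients meeting exactly, namely packing $\Theta(\log n)$ cells into one domain element and the $O(\log n)$-space (hence $\L$) computability of $\DELTA_C$ over logarithmic windows, which is what lets the $\ell$-step local map enter the formula as a legitimate $\L$ numeric relation.
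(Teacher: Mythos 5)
Your proof is correct, and it reaches the theorem by a genuinely different route from the paper's. Both arguments hinge on the same key fact you single out at the end --- that $\Theta(\log n)$ steps of $C$ on a window of $\Theta(\log n)$ cells is computable in logarithmic space, hence enters the formula as a legitimate numeric relation $\DELTA_C$ --- but they deploy it in opposite directions. You simulate \emph{forward}: pack $\ell = \Theta(\log n)$ cells into one domain element, treat the super-cell configuration as an inductively updated relation, and iterate a fixed first-order operator $O(t/\log n)$ times, folding the $A$-finality tests for the $\ell$ intermediate times of each phase into bit vectors and a running flag. The paper instead traces \emph{backward} at the level of a single cell: it quantifies the acceptance time once at the top, writing $\varphi_{C,w} = (\exists t' \le t(|w|))(\forall z)(\exists s \in A)\,\STATE_{C,w}(z,s,t')$, and expresses $\STATE_{C,w}$ by an iterated quantifier block that guesses a window $c \in Q^{2r+1}$ with $\Delta^{r}(c) = s$ (verified by $\DELTA_C$) and re-binds $(z,s)$ via $(\forall z'.\,|z-z'|\le r)$ to recurse on every cell of the guessed window. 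The backward version buys immediacy: it lives natively in the $\FOL[t]$ quantifier-block formalism, never encodes whole configurations, and disposes of intermediate acceptance times for free through the quantified $t'$. Your forward version buys transparency as a direct simulation, but it leans on the additional (standard) fact that iterating a fixed $\FOL$-definable operator $m(n)$ times stays in $\FOL[O(m)]$ (the $\mathrm{IND}[t] = \FO[t]$ correspondence in Immerman's book), and two bookkeeping points you treat briefly deserve a word: the several occurrences of the previous-phase relation (three neighboring super-cells plus the flag, via simultaneous induction) must be merged into a single occurrence of the iterated block by the same universal-quantifier re-binding trick the paper uses in its block $B$, and your per-super-cell acceptance bits must count inactive cells as vacuously accepting, since $A$-finality is defined modulo the quiescent state. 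Neither is a gap --- both are routine --- and your observation that overshooting $t(n)$ by $O(\log n)$ steps is harmless (non-members never reach an $A$-final configuration at any time) is exactly the soundness argument the model's acceptance condition requires.
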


\myomit{
In the following proof, we let \enquote{$\doteq$} denote the equality relation
inside a formula.

\begin{proof}
  Let $Q$ be the state set of $C$ and let $A \subseteq Q$ be the set of
accepting states of $C$; without restriction, we may assume $|Q| \ge 2$.
  In addition, let $w$ be an input for $C$ and $r = \log_{|Q|} |w|$.
  Assuming we have a predicate $\STATE_{C,w}(z, s, t')$ which is true if and
only if cell $z$ of $C$, given input $w$, is in state $s$ after $t'$ steps,
we may express whether $C$ accepts $w$ or not by the following formula:
  \[
    \varphi_{C,w} = (\exists t' \le t(|w|))
      (\forall z)
      (\exists s \in A)
      \; \STATE_{C,w}(z, s, t').
  \]
  Note $\varphi_{C,w}$ is true if and only if the initial configuration $c_0 =
c_0(w)$ of $w$ is such that an $A$-final configuration is reached in at most $t'
\le t(|w|)$ steps of $C$ (i.e., $C$ accepts $w$ in (at most) $t'$ steps).
  In turn, $\STATE_{C,w}$ may be expressed as follows:
  \begin{align*}
    \STATE_{C,w}(z, s, t') = {} &(t' \le r \land \DELTA_C(c_0|N_{t'}(z), s)) \\
        &\lor (\exists r' \le r)
      [B]^{\lfloor t' / r \rfloor}
      \; c \doteq c_0|{N_{r'}(z)}
  \end{align*}
  where $B$ is a quantifier block (iterated $\lfloor t' / r \rfloor$ times) in
which the state $s$ of $z$ is traced back $r$ steps to a former subconfiguration
$c$ (of size $2r+1$), followed by checking that the state of every cell $z'$ in
$c$ is consistent (with the computation of $C$):
  \[
    B = (\exists c \in Q^{2r + 1} . \DELTA_C(c, s))
        (\forall z' . |z - z'| \le r)
        (\exists s \doteq c(z' - z + r))
        (\exists z \doteq z').
  \]
  Note all numeric predicates in $\varphi_{C,w}$ are computable in logarithmic
space, including $\DELTA_C$.
  Since $w \in L(C)$ if and only if $w \models \varphi_{C,w}$ holds, the claim
follows.
\end{proof}
}

Since $\FOL[O((\log n)^k)]$ equals $\L$-uniform $\AC^k$ \cite{immerman_book},
by Proposition~\ref{prop_ACA_unary} we have:

\begin{corollary} \label{cor_ACA_AC}
  For $k \in \N_+$, $\ACA(O((\log n)^k)) \subsetneq \text{$\L$-uniform
$\AC^{k-1}$}$.
\end{corollary}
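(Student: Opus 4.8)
The plan is to read the inclusion straight off Theorem~\ref{thm_ACA_FO} together with the cited identity $\FOL[O((\log n)^j)] = \L$-uniform $\AC^j$, and to establish strictness with a single unary witness via Proposition~\ref{prop_ACA_unary}.

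For the inclusion, let $L \in \ACA(O((\log n)^k))$, witnessed by an ACA $C$ whose time complexity is bounded by $t(n) = \lceil c (\log n)^k \rceil$ for a suitable constant $c$. This $t$ satisfies $t(n) \ge \log n$ for all large $n$ and is constructible in logarithmic space (compute $\lfloor \log n \rfloor$, raise it to the $k$-th power, and scale by $c$, using only $O(\log\log n)$ space). Hence Theorem~\ref{thm_ACA_FO} applies and yields $L = L(C) \in \FOL[O(t/\log n)]$; since $t(n)/\log n = O((\log n)^{k-1})$, this is $\FOL[O((\log n)^{k-1})]$, which by the cited characterization equals $\L$-uniform $\AC^{k-1}$ (with $k = 1$ giving $\FOL[O(1)] = \L$-uniform $\AC^0$).

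For strictness, I would exhibit one unary language lying in $\L$-uniform $\AC^0 \subseteq \L$-uniform $\AC^{k-1}$ that is neither finite nor co-finite; Proposition~\ref{prop_ACA_unary}, applicable because $(\log n)^k \in o(n)$, then forbids it from $\ACA(O((\log n)^k))$. The language $E = \{ 1^{2m} \mid m \in \N_+ \}$ serves: on inputs of length $n$ the desired output is the constant $0$ when $n$ is odd and the unbounded-fan-in conjunction of all input bits when $n$ is even, each a depth-one circuit, and the parity of $n$ needed to choose between the two cases is decidable in logarithmic space, giving $\L$-uniformity. As $E$ is both infinite and (within $\{1\}^+$) co-infinite, Proposition~\ref{prop_ACA_unary} gives $E \notin \ACA(O((\log n)^k))$, completing the separation.

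Given the two imported results, the inclusion is pure bookkeeping, so the only genuinely delicate point is confirming that the unary witness really lands in the circuit class $\AC^0$ and not merely in $\L$. This is where the unary structure must be invoked explicitly: because every length-$n$ input admitted by the language is forced to be $1^n$, the circuit only needs to branch on the hardwired parity of $n$, so the classical obstruction that $\mathrm{PARITY} \notin \AC^0$ never comes into play. I expect this verification to be the main—if ultimately routine—obstacle.
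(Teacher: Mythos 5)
Your proposal is correct and follows exactly the paper's route: apply Theorem~\ref{thm_ACA_FO} with a logspace-constructible bound $t \in O((\log n)^k)$, invoke the identity $\FOL[O((\log n)^{k-1})] = \text{$\L$-uniform } \AC^{k-1}$, and derive strictness from Proposition~\ref{prop_ACA_unary}. The only addition is your explicit unary witness $\{1^{2m} \mid m \in \N_+\}$ with its depth-one circuits, which merely fills in the detail the paper leaves implicit when citing Proposition~\ref{prop_ACA_unary}.
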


Because $\SC^1 \not\subseteq \AC^0$ (regardless of non-uniformity)
\cite{AC_parity}, this is an improvement on Corollary~\ref{cor_ACA_SC} at least
for $k = 1$.
Nevertheless, note the usual uniformity condition for $\AC^0$ is not
$\L$- but the more restrictive $\DLOGTIME$-uniformity \cite{vollmer_book}, and
there is good evidence that these two versions of $\AC^0$ are distinct
\cite{logspace_dlogtime_uniform_ac0}.
Using methods from \cite{Barrington_circuits}, Corollary~\ref{cor_ACA_AC} may
be rephrased for $\AC^0$ in terms of $\TIME(\polylog(n))$- or even
$\TIME((\log n)^2)$-uniformity\myomit{ (since $\DELTA_C$ is computable in
quadratic time by a Turing machine)}, but the $\DLOGTIME$-uniformity case
remains unclear.

\section{Decider ACA} \label{sec_DACA}

So far, we have considered ACAs strictly as language acceptors.
As such, their time complexity for inputs not in the target language (i.e.,
those which are not accepted) is entirely disregarded.
In this section, we investigate ACAs as \emph{deciders}, that is, as machines
which must also (explicitly) reject invalid inputs.
We analyze the case in which these decider ACAs must reject under the same
condition as acceptance (i.e., all cells are simultaneously in a final rejecting
state):

\begin{definition}[DACA] \label{def_DACA}
  A \emph{decider ACA} (\emph{DACA}) is an ACA $C$ which, in addition to its set
$A$ of accept states, has a non-empty subset $R \subseteq Q \setminus \{ q \}$
of \emph{reject states} that is disjoint from $A$ (i.e., $A \cap R =
\varnothing$).
  Every input $w \in \Sigma^+$ of $C$ must lead to an $A$- or an $R$-final
configuration (or both).
  $C$ \emph{accepts} $w$ if it leads to an $A$-final configuration $c_A$
\emph{and none of the configurations prior to $c_A$ are $R$-final}.
  Similarly, $C$ \emph{rejects} $w$ if it leads to an $R$-final configuration
$c_R$ and none of the configurations prior to $c_R$ are $A$-final.
  The time complexity of $C$ (with respect to $w$) is the number of steps
elapsed until $C$ reaches an $R$- or $A$-final configuration (for the first
time).
  $\DACA(t)$ is the DACA analogue of $\ACA(t)$.
\end{definition}

In contrast to Definition~\ref{def_ACA}, here we must be careful so that the
accept and reject results do not overlap (i.e., a word cannot be both accepted
and rejected).
We opt for interpreting the first (chronologically speaking) of the final
configurations as the machine's response.
Since the outcome of the computation is then irrelevant regardless of any
subsequent configurations (whether they are final or not), this is equivalent to
requiring, for instance, that the DACA must halt once a final configuration is
reached.

One peculiar consequence of Definition~\ref{def_DACA} is the relation between
languages which can be recognized by acceptor ACAs and DACAs (i.e., the classes
$\ACA(t)$ and $\DACA(t)$).
As it turns out, the situation is quite different from what is usually expected
of restricting an acceptor model to a decider one, that is, that deciders yield
a (possibly strictly) more restricted class of machines.
In fact, one can show $\DACA(t) \not\subseteq \ACA(t)$ holds for $t \in o(n)$
since $\SOMEONE \not\in \ACA(O(1))$ (see discussion after
Lemma~\ref{lem_locality_ACA}); nevertheless, $\SOMEONE \in \DACA(O(1))$.
For example, the local transition function $\delta$ of the DACA can be chosen as
$\delta(z_1, 0, z_2) = r$ and $\delta(z_1, z, z_2) = a$ for $z \in \{1,a,r\}$,
where $z_1$ and $z_2$ are arbitrary states, and $a$ and $r$ are the (only)
accept and reject states, respectively; see Figure~\ref{fig_dfa_one}.
Choosing the same $\delta$ for an (acceptor) ACA does \emph{not} yield an ACA
for $\SOMEONE$ since then all words of the form $0^+$ are accepted in the second
step (as they are not rejected in the first one).
We stress this rather counterintuitive phenomenon occurs only in the case of
sublinear time (as $\ACA(t) = \CA(t) = \DACA(t)$ for $t \in \Omega(n)$).

\begin{figure}[t]
  \centering
  \begin{tikzpicture}[every node/.style={block},
      block/.style={
        minimum width=width("$0$") + 3mm,
        minimum height=height("$0$") + 3mm,
        text height=1.5ex,
        text depth=.3ex,
        outer sep=0pt,
        draw, rectangle, node distance=0pt}]

    \setlength{\rowspc}{1.55em}
    \setlength{\colspc}{19.65em}

    \node    (z0)                            {$q$};
    \node    (z1)  [right=of z0]             {$0$};
    \node    (z2)  [right=of z1]             {$0$};
    \node    (z3)  [right=of z2]             {$0$};
    \node    (z4)  [right=of z3]             {$0$};
    \node    (z5)  [right=of z4]             {$0$};
    \node    (z6)  [right=of z5]             {$0$};
    \node    (z7)  [right=of z6]             {$q$};
    \draw    (z0.north west) -- ++(-0.4cm,0)
             (z0.south west) -- ++(-0.4cm,0)
             (z7.north east) -- ++(0.4cm,0)
             (z7.south east) -- ++(0.4cm,0);

    \node    (x0)  [yshift=-\rowspc]          {$q$};
    \node    (x1)  [right=of x0]             {$r$};
    \node    (x2)  [right=of x1]             {$r$};
    \node    (x3)  [right=of x2]             {$r$};
    \node    (x4)  [right=of x3]             {$r$};
    \node    (x5)  [right=of x4]             {$r$};
    \node    (x6)  [right=of x5]             {$r$};
    \node    (x7)  [right=of x6]             {$q$};
    \draw    (x0.north west) -- ++(-0.4cm,0)
             (x0.south west) -- ++(-0.4cm,0)
             (x7.north east) -- ++(0.4cm,0)
             (x7.south east) -- ++(0.4cm,0);

    \node    (w0)  [xshift=\colspc]          {$q$};
    \node    (w1)  [right=of w0]             {$0$};
    \node    (w2)  [right=of w1]             {$0$};
    \node    (w3)  [right=of w2]             {$1$};
    \node    (w4)  [right=of w3]             {$0$};
    \node    (w5)  [right=of w4]             {$1$};
    \node    (w6)  [right=of w5]             {$0$};
    \node    (w7)  [right=of w6]             {$q$};
    \draw    (w0.north west) -- ++(-0.4cm,0)
             (w0.south west) -- ++(-0.4cm,0)
             (w7.north east) -- ++(0.4cm,0)
             (w7.south east) -- ++(0.4cm,0);

    \node    (y0)  [xshift=\colspc,yshift=-\rowspc] {$q$};
    \node    (y1)  [right=of y0]             {$r$};
    \node    (y2)  [right=of y1]             {$r$};
    \node    (y3)  [right=of y2]             {$a$};
    \node    (y4)  [right=of y3]             {$r$};
    \node    (y5)  [right=of y4]             {$a$};
    \node    (y6)  [right=of y5]             {$r$};
    \node    (y7)  [right=of y6]             {$q$};
    \draw    (y0.north west) -- ++(-0.4cm,0)
             (y0.south west) -- ++(-0.4cm,0)
             (y7.north east) -- ++(0.4cm,0)
             (y7.south east) -- ++(0.4cm,0);

    \node    (y0)  [xshift=\colspc,yshift=-2\rowspc] {$q$};
    \node    (y1)  [right=of y0]             {$a$};
    \node    (y2)  [right=of y1]             {$a$};
    \node    (y3)  [right=of y2]             {$a$};
    \node    (y4)  [right=of y3]             {$a$};
    \node    (y5)  [right=of y4]             {$a$};
    \node    (y6)  [right=of y5]             {$a$};
    \node    (y7)  [right=of y6]             {$q$};
    \draw    (y0.north west) -- ++(-0.4cm,0)
             (y0.south west) -- ++(-0.4cm,0)
             (y7.north east) -- ++(0.4cm,0)
             (y7.south east) -- ++(0.4cm,0);

    \node (cm) [draw=none, right=of y7, xshift=6mm] {\cmark};
    \node (x) [draw=none, right=of x7, xshift=6mm] {\xmark};

  \end{tikzpicture}
  \caption{Computation of a DACA $C$ which decides $\SOMEONE$.
    The inputs words are $000000 \in L(C)$ and $001010 \not\in L(C)$,
    respectively.}
  \label{fig_dfa_one}
\end{figure}

Similar to (acceptor) ACAs (Lemma~\ref{lem_locality_ACA}), sublinear-time DACAs
operate locally:

\begin{lemma} \label{lem_locality_DACA}
  Let $C$ be a DACA and let $w \in \binalph^+$ be a word which $C$ decides in
exactly $\tau = \tau(w)$ steps.
  Then, for every word $w' \in \binalph^+$ with $p_{2\tau}(w) = p_{2\tau}(w')$,
$I_{2\tau + 1}(w') = I_{2\tau + 1}(w)$, and $s_{2\tau}(w) = s_{2\tau}(w')$, $C$
decides $w'$ in $\le \tau$ steps, and $w \in L(C)$ holds if and only if $w' \in
L(C)$.
\end{lemma}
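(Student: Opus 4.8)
The plan is to mimic the locality argument behind Lemma~\ref{lem_locality_ACA}, but to exploit the \emph{equality} (rather than mere inclusion) of infix sets so as to pin down the behavior of $C$ on $w'$ at \emph{every} time step up to $\tau$, not just at step $\tau$. Write $c_0 = c_0(w)$ and $c_0' = c_0(w')$ for the two initial configurations, and recall that, by locality of $\delta$, the state of a cell $i$ after $t'$ steps depends only on $c_0|N_{t'}(i)$ (resp.\ $c_0'|N_{t'}(i)$). The core claim I would establish is that for every $t'$ with $0 \le t' \le \tau$, the set of states occurring among the active cells of $\Delta^{t'}(c_0')$ equals the set occurring among the active cells of $\Delta^{t'}(c_0)$.

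To prove this I fix $t' \le \tau$ and classify each active cell $i$ of $c_0'$ by whether $N_{t'}(i)$ reaches the inactive region. If $i < t'$ or $i \ge |w'| - t'$, then $N_{t'}(i)$ lies within the first or last $2t' \le 2\tau$ cells (together with boundary inactive cells), so $p_{2\tau}(w) = p_{2\tau}(w')$ and $s_{2\tau}(w) = s_{2\tau}(w')$ give $c_0'|N_{t'}(i) = c_0|N_{t'}(i)$, matching cell $i$ of $c_0$ exactly. If instead $N_{t'}(i)$ is purely active, then $c_0'|N_{t'}(i)$ is a length-$(2t'+1)$ infix of $w'$; since $I_{2\tau+1}(w') = I_{2\tau+1}(w)$ implies $I_{2t'+1}(w') = I_{2t'+1}(w)$ (shorter infixes are determined by longer ones, as in the proof of Lemma~\ref{lem_locality_ACA_iff}), this infix also occurs in $w$, yielding a cell $j$ of $c_0$ with $c_0|N_{t'}(j) = c_0'|N_{t'}(i)$ and hence the same state at time $t'$. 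This shows every active-cell state of $\Delta^{t'}(c_0')$ occurs in $\Delta^{t'}(c_0)$; running the identical argument with the roles of $w$ and $w'$ swapped (which is legitimate precisely because the infix sets are equal, not merely nested) gives the reverse inclusion, establishing the claim.

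With the claim in hand, the conclusion is immediate from the definition of DACA. Being $A$-final (resp.\ $R$-final) is a property of the \emph{set} of active-cell states --- namely that it is contained in $A$ (resp.\ in $R$) --- so for every $t' \le \tau$ the configuration $\Delta^{t'}(c_0')$ is $A$-final iff $\Delta^{t'}(c_0)$ is, and likewise for $R$-finality. Since $C$ decides $w$ in exactly $\tau$ steps, $\Delta^{t'}(c_0)$ is neither $A$- nor $R$-final for $t' < \tau$ and is, say, $A$-final at $t' = \tau$; by the equivalence the same holds for $w'$, so $C$ reaches its first final configuration on $w'$ at step $\tau$, of the same type. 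Hence $C$ decides $w'$ in $\le \tau$ (in fact exactly $\tau$) steps and $w \in L(C)$ iff $w' \in L(C)$.

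The step I expect to be the crux is the bidirectional matching used to prove the claim. Unlike the acceptor case (Lemma~\ref{lem_locality_ACA}), where an inclusion of infix sets suffices because one only needs each cell of the \emph{smaller} word to inherit an accepting state, here I must simultaneously prevent $w'$ from reaching a \emph{reject}-final configuration before step $\tau$ (which would flip the verdict). Ruling this out requires that $\Delta^{t'}(c_0')$ contains no state absent from $\Delta^{t'}(c_0)$, i.e.\ the reverse inclusion of state-sets, and this is exactly where the equality $I_{2\tau+1}(w') = I_{2\tau+1}(w)$ (as opposed to one-sided containment) is indispensable. A minor technical point to dispatch is the implicit assumption $\tau < |w|/2$ (and $\tau < |w'|/2$); otherwise the prefix/suffix conditions force $w = w'$ and there is nothing to prove.
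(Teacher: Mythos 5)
Your proof is correct and takes essentially the same route as the paper's: the paper likewise transfers $A$-/$R$-finality at step $\tau$ exactly as in Lemma~\ref{lem_locality_ACA}, and uses the reverse infix inclusion supplied by the equality $I_{2\tau+1}(w) = I_{2\tau+1}(w')$ to carry a non-accepting witness cell and a non-rejecting witness cell from $\Delta^{\tau'}(c_0)$ into $\Delta^{\tau'}(c_0')$ for every $\tau' < \tau$, which is precisely your bidirectional neighborhood-matching. Your invariant that the sets of active-cell states coincide at every step $t' \le \tau$ is a slightly stronger packaging of the same argument (the paper only transfers the needed witnesses), and your closing remark about $\tau < |w|/2$ matches the paper's own caveat following Lemma~\ref{lem_locality_ACA}.
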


\myomit{
\begin{proof}
  Let $A$ and $R$ be the set of accept and reject states of $C$, respectively,
and let $c_0$ and $c_0'$ denote the initial configurations for $w$ and $w'$,
respectively.
  Given $w \in L(C)$ (resp., $w \not\in L(C)$), it suffices to prove that, on
input $w'$, the following holds: (A) $c_\tau' = \Delta^\tau(c_0')$ is $A$-final
(resp., $R$-final); and (B) $c_{\tau'}' = \Delta^{\tau'}(c_0')$ is neither $A$-
nor $R$-final for $\tau' < \tau$.

  The proof of (A) is essentially the same as that of
Lemma~\ref{lem_locality_ACA}.
  For (B), it suffices to prove that, in every such $c_{\tau'}'$, there is a
cell $z_1$ which is not accepting as well as a cell $z_2$ which is not
rejecting.
  Since the trace of $C$ for $w$ is such that only the last configuration
$c_\tau$ can be $A$- or $R$-final, there is $z_1$ which is not accepting as well
as $z_2$ which is not rejecting in $\Delta^{\tau'}(c_0)$.
  An argument as in the proof of Lemma~\ref{lem_locality_ACA} yields this is
also the case for $c_{\tau'}'$.
\end{proof}
}

One might be tempted to relax the requirements above to $I_{2\tau + 1}(w')
\subseteq I_{2\tau + 1}(w)$ (as in Lemma~\ref{lem_locality_ACA}).
We stress, however, the equality $I_{2\tau + 1}(w) = I_{2\tau + 1}(w')$ is
crucial; otherwise, it might be the case that $C$ takes strictly less than
$\tau$ steps to decide $w'$ and, hence, $w \in L(C)$ may not be equivalent to
$w' \in L(C)$.

We note that, in addition to
Lemmas~\ref{lem_locality_ACA}~and~\ref{lem_locality_ACA_iff}, the results from
Section~\ref{sec_main_results} are extendable to decider ACAs; a more systematic
treatment is left as a topic for future work.
The remainder of this section is concerned with characterizing $\DACA(O(1))$
computation (as a parallel to Theorem~\ref{thm_ACA_SLT}) as well as establishing
the time threshold for DACAs to decide languages other than those in
$\DACA(O(1))$ (as Theorem~\ref{thm_ACA_olog} and the result $\BIN \in
\ACA(O(\log n))$ do for acceptor ACAs).

\subsection{The Constant-Time Case}

First notice that, for any DACA $C$, swapping the accept and reject states
yields a DACA with the same time complexity and which decides the complement of
$L(C)$.
Hence, in contrast to ACAs (see discussion following
Lemma~\ref{lem_locality_ACA}):

\begin{proposition} \label{prop_aca_closed_complement}
  For any $t\colon \N_+ \to \N_+$, $\DACA(t)$ is closed under complement.
\end{proposition}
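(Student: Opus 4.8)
The plan is to exploit the built-in symmetry between acceptance and rejection in Definition~\ref{def_DACA}. Starting from a DACA $C = (Q, \delta, \Sigma)$ with accept set $A$ and reject set $R$ that decides $L = L(C)$ with time complexity bounded by $t$, I would form a new DACA $C'$ on the \emph{same} underlying CA $(Q, \delta, \Sigma)$, merely swapping the two distinguished sets: take $A' = R$ as the accept states and $R' = A$ as the reject states. As $A$ and $R$ are disjoint and non-empty, so are $A'$ and $R'$, hence $C'$ is a legitimate DACA.

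Because $C$ and $C'$ have identical local (and therefore global) transition functions, they produce the same trace on every input $w \in \Sigma^+$. A configuration is $A$-final (resp.\ $R$-final) for $C$ precisely when it is $R'$-final (resp.\ $A'$-final) for $C'$, and in particular it is $(A \cup R)$-final for $C$ iff it is $(A' \cup R')$-final for $C'$ (the two sets coincide). Thus every input that $C$ decides is also decided by $C'$, the first step $\tau$ at which a final configuration appears is the same for both machines, and consequently the time complexity of $C'$ on $w$ equals that of $C$; this already secures $L(C') \in \DACA(t)$.

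It then remains to verify that the labels flip cleanly at the decisive step. Here I would invoke that inputs lie in $\Sigma^+$ and that $A \cap R = \varnothing$: every reachable configuration has at least one active cell, so it cannot be simultaneously $A$-final and $R$-final, whence the first final configuration $c_\tau$ is of exactly one type. If $c_\tau$ is $A$-final (so $C$ accepts $w$), then $c_\tau$ is $R'$-final while no earlier configuration is $A'$-final (as none was $R$-final for $C$), so $C'$ rejects $w$; the symmetric case is identical. Hence $C'$ accepts $w$ iff $C$ rejects $w$, i.e.\ $L(C') = \Sigma^+ \setminus L(C)$. I foresee no real obstacle: the only point needing attention is that the \enquote{first final configuration} tie-breaking rule is well-defined, which follows from the disjointness of $A$ and $R$ together with the standing exclusion of the empty word. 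Notably, the construction preserves the time bound exactly, not merely up to a constant factor.
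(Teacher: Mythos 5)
Your proof is correct and takes essentially the same route as the paper, which establishes the proposition by exactly this observation: swapping the accept set $A$ and reject set $R$ of a DACA $C$ yields a DACA with identical traces, the same time complexity, and language equal to the complement of $L(C)$. You merely make explicit the details the paper leaves implicit, namely that disjointness of $A$ and $R$ (together with the exclusion of the empty word, so every configuration has an active cell) ensures no configuration is simultaneously $A$-final and $R$-final, so the first final configuration cleanly determines the flipped outcome.
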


Using this, we can prove the following, which characterizes constant-time DACA
computation as a parallel to Theorem~\ref{thm_ACA_SLT}:

\begin{theorem} \label{thm_DACA_LT}
  $\DACA(O(1)) = \LT$.
\end{theorem}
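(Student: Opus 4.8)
The plan is to prove the two inclusions separately. The inclusion $\DACA(O(1)) \subseteq \LT$ is immediate from Lemma~\ref{lem_locality_DACA}: if $C$ decides every input in at most $c$ steps, set $k = 2c + 1$; for any two words $w, w'$ with $p_{2c}(w) = p_{2c}(w')$, $I_{2c+1}(w) = I_{2c+1}(w')$, and $s_{2c}(w) = s_{2c}(w')$, applying the lemma to $w$ (decided in some $\tau \le c$ steps) and using that equality of prefixes, infixes, and suffixes of length $k$ entails equality for all smaller lengths, one gets $w \in L(C)$ if and only if $w' \in L(C)$. This is precisely the defining condition of $\LT$ (Definition~\ref{def_LT}).

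For the reverse inclusion $\LT \subseteq \DACA(O(1))$, I would exploit that $\LT$ is the Boolean closure of $\SLT$ and that $\DACA(O(1))$ is already closed under complement (Proposition~\ref{prop_aca_closed_complement}). By Definition~\ref{def_SLT}, every strictly locally testable language is a Boolean combination of the atomic occurrence languages ``$u$ appears as an infix'', ``$u$ is the length-$|u|$ prefix'', and ``$u$ is the length-$|u|$ suffix'' (a prefix condition is a union of prefix-atoms, the infix condition $I_k(w) \subseteq \mu$ an intersection of complemented infix-atoms, and dually for suffixes). Hence it suffices to place these atoms in $\DACA(O(1))$ and to establish closure under intersection (union then following from De Morgan, since complement is free); every $L \in \LT$ is then a Boolean combination of atoms and so lands in $\DACA(O(1))$.

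The atoms call for two different timing tricks. The infix atom generalizes the DACA for $\SOMEONE$: each cell gathers its length-$k$ window in $O(k)$ steps while staying in neutral (non-final) states; a cell whose window equals $u$ then becomes ``good'' and every other cell turns reject, so that if $u$ never occurs the configuration is uniformly rejecting and the word is rejected, whereas otherwise not all cells reject and everyone flips to accept one step later. The prefix and suffix atoms instead use a \emph{clock-and-boundary} construction: cells more than $k$ away from both ends run a fixed two-alarm clock, poised to form an accept-final configuration at a time $T_a$ and only later a reject-final configuration at $T_r$ while remaining neutral in between, and the $O(k)$ boundary cells read the prefix (resp.\ suffix) and either join the accept alarm, if it equals $u$, or suppress it and join the reject alarm otherwise. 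For intersection I would run two given DACAs as a product, with the combined finality coloring reading accept iff both components are accept-colored and reject iff either is reject-colored; ``first final configuration wins'' then returns the correct value, provided each component is normalized to stay neutral until it commits and to settle into a uniform colored configuration afterwards.

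The hard part, and the crux of the whole proof, is reconciling these two mechanisms inside a single automaton. The infix mechanism needs the interior cells to participate actively (an existential/universal vote realized by a uniform all-reject configuration), whereas the prefix/suffix mechanism needs the interior to be an inert background that only the boundary may ``switch on'' at the right alarm --- and, crucially, that clock is \emph{not} absorbing, since a far interior cell can never learn the prefix and so cannot commit on its own. Merging them forces a more careful schedule: I expect to enumerate the finitely many prefix/suffix types, assign each a distinct alarm time at which the interior (running the corresponding infix-combination as a product) is uniformly colored, and have the boundary unlock exactly the alarm matching the true prefix and suffix while suppressing all others, all the while counting steps with $O(1)$-bounded counters. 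Verifying that no spurious uniform configuration arises before the intended alarm --- so that ``first final configuration wins'' yields precisely the $\LT$ membership predicate --- is the delicate, bookkeeping-heavy core of the argument.
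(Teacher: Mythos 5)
Your first inclusion, $\DACA(O(1)) \subseteq \LT$ via Lemma~\ref{lem_locality_DACA}, is exactly the paper's argument and is fine. The gap is in the converse inclusion.

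Your reduction of $\LT \subseteq \DACA(O(1))$ to \enquote{atoms plus complement plus closure under intersection} is sound set-theoretically, but the load-bearing step --- intersection via a product of \emph{normalized} components --- rests on a normalization that constant-time DACAs cannot provide. Your own infix atom shows why: it necessarily exploits the chronological-priority semantics (an all-reject vote at some step $T$, followed by an \emph{unconditional} flip to all-accept at $T+1$), because an interior cell cannot learn in $O(1)$ steps whether $u$ occurs somewhere far away. Consequently no constant-time DACA for an occurrence atom can \enquote{settle into a uniform colored configuration} matching its verdict --- the verdict-time coloring is inevitably overwritten one step later. In a product this is fatal. For instance, take the complemented infix atom (obtained by swapping states, so it votes all-accept at $T$ and then flips to all-reject at $T+1$) in a product with a second component that commits only after step $T+1$, on an input where both components should accept: at step $T$ the product is not final (the second coordinate is still neutral), and at step $T+1$ the first coordinate is uniformly reject-colored, so the product hits a spurious $R$-final configuration and rejects a word it should accept. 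The alarm-staggering you propose to repair this is precisely the unresolved core, and it is not routine bookkeeping: the paper explicitly leaves closure of $\DACA(t)$ under intersection open for general $t$, and obtains the constant-time case only \emph{a posteriori} from $\DACA(O(1)) = \LT$ --- so assuming intersection-closure inside the proof is, absent a genuinely new argument, close to circular.

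The paper's proof avoids products and intersections altogether. It writes every $L \in \LT$ in Hausdorff normal form $F = S_1 \cup (\overline{S_2} \cap (S_3 \cup (\overline{S_4} \cap \cdots)))$ with $S_i \in \SLT$, and then needs only three ingredients: (i) $\SLT \subseteq \ACA(O(1)) \subseteq \DACA(O(1))$, where the second inclusion uses a timeout trick --- an ACA whose acceptance time is bounded by a constant $t$ becomes a DACA by having every cell reject unconditionally at step $t+1$ (legal precisely because acceptance must occur by step $t$); (ii) complement for free (Proposition~\ref{prop_aca_closed_complement}); and (iii) the single remaining operation, union with an $\SLT$ language, realized \emph{sequentially} rather than in parallel: the DACA simulates the ACA for $S_1$ for $t$ steps and accepts if it accepts; otherwise it hands the saved input over to the inductively obtained DACA for the tail of the chain. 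Sequential hand-over composes trivially under the first-final-configuration semantics --- exactly the property your parallel product lacks. To salvage your route you would have to actually prove your scheduling claim for arbitrary nested Boolean combinations; the Hausdorff decomposition is how the known proof discharges that burden.
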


Hence, we obtain the rather surprising inclusion $\ACA(O(1)) \subsetneq
\DACA(O(1))$, that is, for constant time, DACAs constitute a \emph{strictly
more} powerful model than their acceptor counterparts.

\myomit{
\begin{proof}
  The inclusion $\DACA(O(1)) \subseteq \LT$ is obtained by using the locally
testable property (see Definition~\ref{def_LT}) together with
Lemma~\ref{lem_locality_DACA}.
  For the converse inclusion, we use that, for every set $F$ in the Boolean
closure over a collection $\mathcal{S}$, there are $S_1, \dots, S_m \in
\mathcal{S}$ such that:%
  \footnote{This is due to Hausdorff \cite{Hausdorff_Mengenlehre}.
    Interestingly, the same resource found application, for instance, not only
in complexity theory \cite{Buss_ttpNP, BHI} but also in characterizing related
variants of cellular automata \cite{XCA}.}
  \[
    F = S_1 \cup (\overline{S_2} \cap (S_3 \cup (\overline{S_4} \cap \cdots
(S_{m-1} \cup \overline{S_m}))))
  \]
  where $\overline{S}$ denotes the complement of $S \in \mathcal{S}$.
  Using that $\LT$ is the Boolean closure of $\mathcal{S} = \SLT$, it suffices
to show that, for any such $F$, there is a DACA $C$ with $L(C) = F$.

  First, we prove $\ACA(O(1)) \subseteq \DACA(O(1))$; due to
Theorem~\ref{thm_ACA_SLT}, $\SLT \subseteq \DACA(O(1))$ follows.
  Let $C'$ be an ACA with time complexity bounded by $t \in \N_0$.
  Since $C'$ must accept any input $w$ in at most $t$ steps, if $t + 1$ steps
elapse without $C'$ accepting, then necessarily $w \not\in L(C')$.
  Hence, $C'$ can be transformed into a DACA $C''$ by having all cells
unconditionally become rejecting in step $t + 1$.
  We have then $L(C') = L(C'')$, and $\ACA(O(1)) \subseteq \DACA(O(1))$ follows.

  For the construction of $C$, fix $F$ and use induction on the number $r \in
\N_0$ of set operations in $F$.
  The induction basis is given by $\SLT \subseteq \DACA(O(1))$.
  For the induction step, let $F' = \overline{S_2} \cap (S_3 \cup
(\overline{S_4} \cap \cdots))$ and note the complement of $F'$ is $F'' = S_2
\cup (\overline{S_3} \cap (S_4 \cup \cdots))$; hence, applying the induction
hypothesis we obtain a DACA for $F''$ and, using
Proposition~\ref{prop_aca_closed_complement}, we also obtain a DACA $C'$ with
$L(C') = F'$.
  In addition, let $C_1$ be an ACA for $S_1$ with time complexity bounded by $t
\in \N_0$.
  Now we may describe the operation of $C$:
  $C$ simulates $C_1$ on its input $w$ (while saving $w$) for $t$ steps and
accepts if $C_1$ does; otherwise, if $t + 1$ steps elapse without $C$ accepting,
then it simulates $C'$ (including its acceptance or rejection behavior).
  Thus, if $w \in S_1$, then  $w \in L(C)$; otherwise, $w \in L(C)$ if and only
if $w \in F'$.
  Since $C_1$ and $C'$ both run in constant time, so does $C$.
\end{proof}
}

\subsection{Beyond Constant Time} \label{sec_DACA_beyond_constant}

Theorem~\ref{thm_ACA_olog} establishes a logarithmic time threshold for
(acceptor) ACAs to recognize languages not in $\ACA(O(1))$.
We now turn to obtaining a similar result for DACAs.
As it turns out, in this case the bound is considerably larger:

\begin{theorem} \label{thm_DACA_sqrt}
  For any $t \in o(\sqrt{n})$, $\DACA(t) = \DACA(O(1))$.
\end{theorem}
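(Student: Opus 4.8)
The plan is to show $\DACA(t) \subseteq \DACA(O(1))$ for $t \in o(\sqrt{n})$ by proving that any DACA with time complexity $t \in o(\sqrt{n})$ actually runs in constant time; the reverse inclusion is trivial. The natural tool is the locality lemma for DACAs (Lemma~\ref{lem_locality_DACA}), which — unlike its acceptor counterpart — requires \emph{equality} of infix sets rather than mere inclusion. This equality requirement is precisely what makes the threshold jump from $\log n$ (as in Theorem~\ref{thm_ACA_olog}) all the way up to $\sqrt{n}$, and it will be the crux of the argument.

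First I would fix a DACA $C$ with state set $Q$ and suppose its time complexity is bounded by $t \in o(\sqrt{n})$. The goal is to find a constant $c$ (depending only on $C$) such that $C$ decides every input in at most $c$ steps. Suppose not: then there are inputs $w$ of unbounded length on which $C$ takes $\tau(w) = \tau$ steps with $\tau$ arbitrarily large. Fix such a $w$ with $\tau$ large and $|w|$ correspondingly large (but still $\tau \le t(|w|) \in o(\sqrt{|w|})$, so $\tau$ is much smaller than $\sqrt{|w|}$, hence $\tau^2 \ll |w|$). The strategy is to build a \emph{shorter} input $w'$ that shares the same length-$(2\tau+1)$ infix set, the same length-$2\tau$ prefix, and the same length-$2\tau$ suffix as $w$. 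By Lemma~\ref{lem_locality_DACA} (applied with the equality condition), $C$ then decides $w'$ in $\le \tau$ steps and agrees with $w$ on membership — but more importantly, by running the argument in both directions one forces $w$ itself to be decided in at most the (bounded) time that such short witnesses require, contradicting the unboundedness of $\tau$.

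The key combinatorial step is the shortening via the De Bruijn graph, exactly as in the proof of Theorem~\ref{thm_ACA_olog}: form the path $P$ traced by the consecutive length-$\kappa$ infixes of $w$ (where $\kappa = 2\tau + 1$), take the induced subgraph $G'$ on the visited nodes, and replace $P$ by a shorter walk $P'$ that starts and ends at the same vertices and \emph{visits every vertex of $G'$ at least once}, of length at most $m^2 \le |Q|^{2\kappa}$ where $m$ is the number of vertices. The "visits every vertex" clause is essential here because it is what guarantees $I_\kappa(w') = I_\kappa(w)$ as an \emph{equality} (not just $\subseteq$), which is exactly the hypothesis Lemma~\ref{lem_locality_DACA} demands. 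The resulting $w'$ has length at most $\kappa + |Q|^{2\kappa}$, which is $|Q|^{O(\tau)}$. The whole scheme closes provided this shortened length can be made $< |w|$, i.e., provided $|Q|^{O(\tau)} < |w|$; since $\tau \le t(|w|) \in o(\sqrt{|w|})$, we have $|Q|^{O(\tau)} = 2^{O(\tau)} = 2^{o(\sqrt{|w|})}$, which is indeed eventually smaller than $|w|$. One then descends by induction on input length down to a finite base set, bounding $\tau$ by a constant.

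The main obstacle I anticipate is verifying that the shortening genuinely preserves the infix set as an equality while simultaneously preserving the prefix and suffix of length $2\tau$. Keeping the endpoints of $P'$ equal to those of $P$ secures the prefix/suffix agreement only up to the boundary infixes, so some care is needed to check that $p_{2\tau}(w') = p_{2\tau}(w)$ and $s_{2\tau}(w') = s_{2\tau}(w)$ hold after the detours — essentially one must ensure the first and last $\kappa - 1$ symbols are copied verbatim and the surgery happens strictly in the interior. The other delicate point is the bookkeeping that turns the single-step shortening into a clean induction terminating at a finite set of short words, so that "decided in $\le \tau(w')$ steps" for finitely many $w'$ yields a uniform constant bound. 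Both of these are direct adaptations of the machinery already deployed in the proof of Theorem~\ref{thm_ACA_olog}, with the arithmetic loosened from the $o(\log n)$ regime (where $|Q|^{3\kappa} < n$ was needed) to the $o(\sqrt{n})$ regime (where only $|Q|^{O(\tau)} < n$ is needed, since the DACA locality lemma no longer costs us a logarithm in the exponent comparison).
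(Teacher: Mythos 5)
There is a genuine gap, and it is fatal to the approach: your arithmetic in the closing step is wrong. You claim that $|Q|^{O(\tau)} = 2^{o(\sqrt{|w|})}$ is eventually smaller than $|w|$, but $2^{o(\sqrt{n})}$ is \emph{not} eventually below $n$ --- take $t(n) = n^{1/4} \in o(\sqrt{n})$, for which $|Q|^{c\,t(n)}$ is superpolynomial in $n$. The De Bruijn shortening produces a witness of length \emph{exponential} in $\tau$, so the surgery yields a strictly shorter word only when $\tau \in o(\log n)$; this is exactly why Theorem~\ref{thm_ACA_olog} sits at the logarithmic threshold, and an adaptation of that machinery to DACAs can at best prove $\DACA(t) = \DACA(O(1))$ for $t \in o(\log n)$, far short of the claimed $\sqrt{n}$ bound. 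Your diagnosis that the equality requirement in Lemma~\ref{lem_locality_DACA} is what pushes the threshold from $\log n$ to $\sqrt{n}$ is therefore also off the mark: no variant of the De Bruijn argument, with inclusion or equality, can cross $\log n$, because the length of the compressed word is governed by the number of nodes of the De Bruijn graph, i.e., by $|Q|^{\Theta(\tau)}$.

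The paper's proof works with a completely different, \emph{quadratic} compression that bypasses the locality lemma altogether. If $C$ is not constant-time, pick for each $i$ a word $w_i$ that $C$ takes more than $i$ steps to decide; then for every step $j \le i$ there is a cell $x^i_j$ that is not accepting and a cell $y^i_j$ that is not rejecting at step $j$ (otherwise the configuration at step $j$ would be final). The step-$j$ state of such a cell depends only on the initial contents of its radius-$j$ neighborhood, so deleting every cell outside $\bigcup_{j \le i} \bigl( N_j(x^i_j) \cup N_j(y^i_j) \bigr)$ leaves each relevant interval contiguous and intact, and the contracted word $w_i'$ still admits, at every step $j \le i$, a non-accepting and a non-rejecting cell --- i.e., $C$ still takes more than $i$ steps on $w_i'$. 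Since $|w_i'| \le 2(i+1)^2$, the family $\{ w_i' \}$ forces $t \in \Omega(\sqrt{n})$. The moral, which your plan misses, is that one need not preserve the global local structure of $w$ (prefix, suffix, and full infix set) at all; it suffices to preserve the $O(\tau^2)$ symbols that \emph{block} finality, and it is the quadratic size of that critical region --- not any property of infix sets --- that produces the $\sqrt{n}$ threshold.
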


One immediate implication is that $\DACA(t)$ and $\ACA(t)$ are
\emph{incomparable} for $t \in o(\sqrt{n}) \cap \omega(1)$ (since, e.g., $\BIN
\in \ACA(\log n)$; see Section~\ref{sec_first_obs}).
The proof idea is that any DACA whose time complexity is not constant admits an
infinite sequence of words with increasing time complexity; however, the time
complexity of each such word can be traced back to a critical set of cells which
prevent the automaton from either accepting or rejecting.
By contracting the words while keeping the extended neighborhoods of these cells
intact, we obtain a new infinite sequence of words which the DACA necessarily
takes $\Omega(\sqrt{n})$ time to decide:

\begin{proof}
  Let $C$ be a DACA with time complexity bounded by $t$ and assume $t \not\in
O(1)$; we show $t \in \Omega(\sqrt{n})$.
  Since $t \not\in O(1)$, for every $i \in \N_0$ there is a $w_i$ such that $C$
takes strictly more than $i$ steps to decide $w_i$.
  In particular, when $C$ receives $w_i$ as input, there are cells $x^i_j$ and
$y^i_j$ for $j \in \{ 0, \dots, i \}$ such that $x^i_j$ (resp., $y^i_j$) is not
accepting (resp., rejecting) in step $j$.
  Let $J_i$ be the set of all $z \in \{ 0, \dots, |w_i| - 1 \}$ for which
$\min\{|z - x^i_j|, |z - y^i_j|\} \le j$, that is, $z \in N_j(x^i_j) \cup
N_j(y^i_j)$ for some $j$.
  Consider the restriction $w_i'$ of $w_i$ to the symbols having index in
$J_i$, that is, $w_i'(k) = w_i(j_k)$ for $J_i = \{ j_0, \dots, j_{m-1} \}$ and
$j_0 < \dots < j_{m-1}$, and notice $w_i'$ has the same property as $w_i$ (i.e.,
$C$ takes strictly more than $i$ steps to decide $w_i$).
  Since $|w_i'| = |J_i| \le 2(i+1)^2$, $C$ has $\Omega(\sqrt{n})$ time
complexity on the (infinite) set $\{ w_i' \mid i \in \N_0 \}$.
\end{proof}

Using $\IDMAT$ (see Section~\ref{sec_first_obs}), we show the bound
in Theorem~\ref{thm_DACA_sqrt} is optimal:

\begin{proposition} \label{prop_DACA_IDMAT}
  $\IDMAT \in \DACA(O(\sqrt{n}))$.
\end{proposition}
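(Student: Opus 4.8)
The plan is to reuse the $O(\sqrt{n})$ acceptor for $\IDMAT$ from Section~\ref{sec_first_obs} for the accepting direction and to spend the effort on the rejecting direction and on synchronizing the two verdicts. The guiding observation is that every $w \in \IDMAT$ consists of blocks of the \emph{same} length $k = \Theta(\sqrt{|w|})$ separated by single $\#$ markers, and that this $k$ is computable by each cell purely locally, as the distance between the two $\#$/boundary markers delimiting its block, in $O(k)$ steps. I would let every cell run the block-comparison procedure of the acceptor — verifying that its block equals $w_k(i)$, that its single $1$ sits one position further right than in the left-neighbor block, that the first and last blocks are $w_k(0)$ and $w_k(k-1)$, and that consecutive blocks are separated by exactly one $\#$ — and have it reach its verdict by a common step $ck$ for a fixed constant $c$. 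Since for $w \in \IDMAT$ all blocks share the same $k$, all cells become accepting simultaneously at step $ck$, giving acceptance in $O(\sqrt{n})$.

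For rejection I would exploit the same shared deadline. A cell that has finished its verification by step $ck$ without all checks passing switches to a reject state; and to avoid the mixed (neither $A$- nor $R$-final) configurations that otherwise arise when only part of the input is faulty, every cell of a region of uniform block length $k$ unconditionally enters a reject state at step $ck+1$ unless an $A$-final configuration has already been produced. Because a valid word produces its $A$-final configuration exactly at step $ck$ — one step before the unconditional reject — and because an all-accepting configuration is reachable only when all blocks share one $k$ and all checks pass, i.e.\ only for $w \in \IDMAT$, the machine decides exactly $\IDMAT$ with both verdicts taken in $O(k) = O(\sqrt{n})$ steps on such inputs. Closure of $\DACA(t)$ under complement (Proposition~\ref{prop_aca_closed_complement}) can be used to tidy the bookkeeping of accept versus reject states, and the constant-time base cases ($k \le 2$) can be hard-coded as in the proof of Theorem~\ref{thm_DACA_LT}.

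The step I expect to be the main obstacle is guaranteeing that \emph{every} $w \notin \IDMAT$ reaches a globally rejecting configuration within $O(\sqrt{n})$ steps. A one-dimensional machine cannot move information across a length-$n$ input in sublinear time, so a reject verdict cannot be propagated from a single local witness, and the synchronized deadline above only disposes of inputs that already carry a uniform block length $k = O(\sqrt{n})$ (faulty content, misplaced or missing $1$, doubled $\#$, wrong endpoints). The delicate regime is that of \emph{degenerate} inputs whose block length is not $\Theta(\sqrt{n})$: inputs with a run or block far longer than $\sqrt{n}$, or with non-uniform block sizes, where local measurement of $k$ is itself too slow to serve as a deadline. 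My approach would be to reject these by detecting, in $O(\sqrt{n})$ time, local patterns that provably occur nowhere in a word of $\IDMAT$ — a maximal run abutting a word boundary, a first or last symbol other than $1$, two $1$s or no $1$ inside a block, or adjacent blocks of differing measured length — and to argue that any such input is all-rejecting early because it lacks the ``protecting'' feature configurations that keep a genuine $\IDMAT$ word from rejecting before step $ck$. The truly hard point, where I expect the argument to concentrate, is an \emph{isolated} feature embedded in a very long run (e.g.\ a lone $1$ with $\Theta(n)$ zeros on either side): the cells adjacent to it locally mimic the interior of a large valid block, so they cannot locally certify invalidity quickly, yet they must reject for the configuration to become globally rejecting. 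Throughout, the construction must respect Lemma~\ref{lem_locality_DACA} — equality, not mere inclusion, of the infix sets — which is precisely what forces inputs differing only in their run lengths to share a decision time, and establishing that this common time can be kept within $O(\sqrt{n})$ for all such families is the crux of the proof.
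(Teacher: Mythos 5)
There is a genuine gap, and it is exactly the one you flag yourself: your construction has no mechanism that rejects degenerate inputs in $O(\sqrt{n})$ time. Your reject deadline $ck+1$ is computed from a \emph{locally measured} block length $k$, so on an input such as $0^n$, or a lone $1$ surrounded by $\Theta(n)$ zeros, the measurement itself takes $\Theta(n)$ steps and the deadline never arrives in sublinear time; as you concede, no forbidden local pattern exists in such inputs, since cells near the lone $1$ look exactly like the interior of a huge valid block. A DACA that decides $\IDMAT$ but takes linear time on these inputs does not prove $\IDMAT \in \DACA(O(\sqrt{n}))$, so the ``crux'' you defer is precisely the content of the proposition.

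The paper closes this hole by inverting the default, a move your proposal gestures at (``protecting feature configurations'') but never constructs. In the paper's automaton every cell is \emph{rejecting from the outset} except for a marker spawned at each block's left end; the marker sweeps right at speed one and keeps its cell non-rejecting at step $i$ only if its block has prefix $0^{i-1}1$. Hence the configuration fails to be $R$-final at step $i$ if and only if the input contains, for every $j \le i$, a (distinct) block with prefix $0^{j-1}1$ --- which forces $|w| \ge \frac{1}{2}i(i+1)$, i.e.\ staying undecided for $i$ steps is itself a certificate that $|w| = \Omega(i^2)$. Rejection therefore needs no information to travel and no deadline computed from a block length: \emph{non-rejection} needs witnesses distributed across the word, and genuine $\IDMAT$ words supply exactly the family $w_k(0), w_k(1), \dots$ of such prefixes, protecting themselves for the $O(k) = O(\sqrt{|w|})$ steps the interleaved block-comparison phase (your phase, essentially the acceptor's) needs to accept. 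Your lone-$1$ input, by contrast, lacks even a block with prefix $1$ and is rejected in $O(1)$ steps. Without this default-reject/witness-prefix mechanism (or an equivalent), your argument cannot be completed, because Lemma~\ref{lem_locality_DACA} --- with its equality of infix sets, as you correctly note --- rules out any purely local detection of such inputs.
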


We have $\IDMAT \in \ACA(O(\sqrt{n}))$ (see Section~\ref{sec_first_obs}); the
non-trivial part is ensuring the DACA also rejects every $w \not\in \IDMAT$ in
$O(\sqrt{|w|})$ time.
In particular, in such strings the $\#$ delimiters may be an arbitrary number
of cells apart or even absent altogether; hence, naively comparing every pair
of blocks is not an option.
Rather, we check the existence of a particular set of substrings of increasing
length and which must present if the input is in $\IDMAT$.
Every $O(1)$ steps the existence of a different substring is verified; the
result is that the input length must be at least quadratic in the length of the
last substring tested (and the input is timely rejected if it does not contain
any one of the required substrings).

\myomit{
\begin{proof}
  We construct an ACA $A$ which, given an input $w$, decides $\IDMAT$ in
$O(\sqrt{|w|})$ time.

  As a warm-up, first consider the case in which every block $B$ has the same
length $b \in O(\sqrt{|w|})$ and that every neighboring pair of blocks is
separated by a single $\#$.
  The leftmost cell in $B$ creates a special marker symbol $m$.
  During this first procedure, every cell which does not contain such an $m$ is
rejecting.
  At each step, if $m$ is on a cell containing a $0$ or it determines the string
it has read so far does not satisfy the regular expression $0^\ast 1 0^\ast$,
then it marks the cell as rejecting; otherwise, it does nothing (i.e., the cell
remains not rejecting).
  $m$ propagates itself to the right with speed $1$, the result being that $A$
does not reject in $i > 0$ steps if and only if for every $p \in \{ 1, 0 1,
\dots, 0^{i-1} 1 \}$ there is a block in $w$ for which $p$ is a prefix.
  It follows that $|w| \ge \frac{1}{2}i(i+1)$ and, in particular, if $b$ steps
have elapsed, then $|w| > \frac{1}{2}b^2$.
  Thus, if $A$ rejects a word during this procedure, then it does so in
$O(\sqrt{|w|})$ time.
  Once $m$ encounters $\#$, it triggers a block comparison procedure as in the
ACA $A'$ which accepts $\IDMAT$.
  This requires $O(b)$ time.
  If a violation is detected, $B$ becomes rejecting and maintains that state.
  Finally, if a number of steps have elapsed such that $A'$ would already have
accepted (which by construction of $A'$ can be determined in $O(b)$ time as a
function of $b$), $B$ becomes rejecting and maintains that result, even if it
contains cells which had been previously marked as accepting.
  Thus, $A$ accepts if and only if $A'$ does (and rejects otherwise).

  For the general case in which the block lengths vary, we let the two
procedures run in parallel, with the cells of $A$ switching between the two
back and forth.
  More precisely, the computation of $A$ is subdivided into rounds, with each
round consisting of two phases, both taking constant time each; the time
complexity of $A$, then, is directly proportional to the number of rounds
elapsed until a final configuration is reached.
  The two phases correspond to the two aforementioned procedures, that is, phase
one ($P_1$) checks that the blocks satisfy the regular expression $0^\ast 1
0^\ast$ as well as ensures the presence of the $0^\ast 1$ prefixes; phase two
($P_2$) checks the blocks are of the same length and have valid contents.
  $P_1$ advances its procedure one step at a time, while $P_2$ advances two
steps (as in the ACA construction; see Section~\ref{sec_first_obs}), and we
separate the two so as to not interfere with each other; namely, if a cell is
accepting (resp., rejecting) in one of the two phases, then it is not
necessarily so in the other one; it is only so if the procedure corresponding to
the latter phase mandates it to be so.
  If $w \in \IDMAT$, the two phases end simultaneously after $3b \in
O(\sqrt{|w|})$ steps, and $A$ accepts.
  Conversely, if $A$ rejects, then it also does so in at most $3b' \in
O(\sqrt{|w|})$ steps where $b' \in \N_+$ is maximal such that $\# 1, \# 01,
\dots, \# 0^{b'-1} 1$ are all substrings of $w \not\in \IDMAT$.
\end{proof}
}

\section{Conclusion and Open Problems} \label{sec_conclusion}

Following the definition of ACAs in Section~\ref{sec_def},
Section~\ref{sec_first_obs} reviewed existing results on $\ACA(t)$ for sublinear
$t$ (i.e., $t \in o(n)$); we also observed that sublinear-time ACAs operate in
an inherently local manner (Lemmas~\ref{lem_locality_ACA}
and~\ref{lem_locality_ACA_iff}).
In Section~\ref{sec_main_results}, we proved a time hierarchy theorem
(Theorem~\ref{thm_ACA_time_hierarchy}), narrowed down the languages in $\ACA(t)
\cap \REG$ (Theorem~\ref{thm_ACA_cap_reg}), improved Theorem~\ref{thm_ACA_log}
to $\ACA(o(\log n)) = \ACA(O(1))$ (Theorem~\ref{thm_ACA_olog}), and, finally,
obtained (strict) inclusions in the parallel computation classes $\SC$ and $\AC$
(Corollaries~\ref{cor_ACA_SC} and~\ref{cor_ACA_AC}, respectively).
The existence of a hierarchy theorem for ACAs is of interest because obtaining
an equivalent result for $\NC$ and $\AC$ is an open problem in computational
complexity theory.
Also of note is that the proof of Theorem~\ref{thm_ACA_time_hierarchy} does not
rely on diagonalization (the prevalent technique for most computational models)
but, rather, on a quintessential property of sublinear-time ACA computation
(i.e., locality as in the sense of Lemma~\ref{lem_locality_ACA}).

In Section~\ref{sec_DACA}, we considered a plausible definition of ACAs as
language deciders as opposed to simply acceptors, obtaining DACAs.
The respective constant-time class is $\LT$ (Theorem~\ref{thm_DACA_LT}), which
surprisingly is a (strict) superset of $\ACA(O(1)) = \SLT_\lor$.
Meanwhile, $\Omega(\sqrt{n})$ is the time complexity threshold for deciding
languages other than those in $\LT$ (Theorem~\ref{thm_DACA_sqrt} and
Proposition~\ref{prop_DACA_IDMAT}).

As for future work, the primary concern is extending the results of
Section~\ref{sec_main_results} to DACAs.
$\DACA(O(1)) = \LT$ is closed under union and intersection and we saw that
$\DACA(t)$ is closed under complement for any $t \in o(n)$; a further question
would be whether $\DACA(t)$ is also closed under union and intersection.
Finally, we have $\ACA(O(1)) \subsetneq \DACA(O(1))$, $\ACA(O(n)) = \CA(O(n)) =
\DACA(O(n))$, and that $\ACA(t)$ and $\DACA(t)$ are incomparable for $t \in
o(\sqrt{n}) \cap \omega(1)$; it remains open what the relation between the two
classes is for $t \in \Omega(\sqrt{n}) \cap o(n)$.

\subsubsection*{Acknowledgments.}
  I would like to thank Thomas Worsch for the fruitful discussions and feedback
during the development of this work.
  I would also like to thank the DLT 2020 reviewers for their valuable comments
and suggestions and, in particular, one of the reviewers for pointing out a
proof idea for Theorem~\ref{thm_ACA_olog}, which was listed as an open problem
in a preliminary version of the paper.

\printbibliography

\end{document}